\documentclass{article} 
\usepackage{iclr2027_conference,times}

\usepackage{amsmath,amsfonts,bm}

\def\eqref#1{equation~\ref{#1}}

\def\1{\bm{1}}

\DeclareMathAlphabet{\mathsfit}{\encodingdefault}{\sfdefault}{m}{sl}
\SetMathAlphabet{\mathsfit}{bold}{\encodingdefault}{\sfdefault}{bx}{n}

\usepackage{hyperref}
\usepackage{url}
\usepackage{graphicx}
\usepackage{subcaption}
\usepackage{booktabs}
\usepackage{siunitx}
\usepackage{amssymb}
\usepackage{amsfonts}
\usepackage{amsmath}
\usepackage{amsthm}

\usepackage{algorithm}
\usepackage{algpseudocode}
\usepackage{tabularx}
\theoremstyle{plain}
\newtheorem{theorem}{Theorem}[section]

\newtheorem{proposition}[theorem]{Proposition}

\theoremstyle{definition}

\newtheorem{assumption}[theorem]{Assumption}

\newtheorem{problem}[theorem]{Problem}

\theoremstyle{remark}

\DeclareMathOperator*{\softmin}{softmin}
\usepackage{adjustbox}

\usepackage{wrapfig}

\title{SAFTER: Safety-Aware Filtering Through Explicit Recoverability for Robust Reinforcement Learning under Distribution Shift}

\author{Benedictus C. G. Cinun \\
Department of Electrical and Computer Engineering\\
University of Houston\\
Houston, TX 77004, USA \\
\texttt{bgerodacinun@uh.edu} \\
\And
Bin Hu\\
Department of Engineering Technology \\
University of Houston \\
Houston, TX 77004, USA \\
\texttt{bhu11@central.uh.edu} \\
}

\iclrfinalcopy 
\begin{document}

\maketitle

\begin{abstract}
Learning safe policies that generalize to unseen environments is difficult under bounded actuation, where an unsafe decision may become unrecoverable.
Conventional control barrier function (CBF) filters enforce only instantaneous safety and ignore future recoverability.
We propose SAFTER (Safety-Aware Filtering Through Explicit Recoverability), which embeds a differentiable, closed-form backup-CBF filter into Soft Actor-Critic training.
SAFTER unifies current safety, finite-horizon backup safety, terminal recoverability, and input admissibility into a single differentiable certificate.
Its CBF condition yields an exact minimum-intervention projection without iterative online optimization.
In quadruped navigation, SAFTER reaches $90\%$ training success with $30.6\%$ fewer transitions than Vanilla SAC and $34.1$--$35.1\%$ fewer than current-state CBF-filtered SAC baselines.
Across unseen simulated environments, SAFTER improves success by up to $42.4$ and $61.4$ percentage points over the state-of-the-art methods CBF--RL and SEA--Nav, respectively, and achieves $73.3\%$ and $76.7\%$ success on two real-world navigation maps.
Ablations identify accurate differentiation through the safety filter as a key contributor to robustness, in which replacing the projection Jacobian with the identity approximation reduces success rate in a High-Shift environment from $61.4\%$ to $6.2\%$.
These results show that recoverability-aware filtering serves as both a runtime safety mechanism and a learning mechanism for sample-efficient, robust policies.
\end{abstract}

\section{Introduction}
Reinforcement learning (RL) can produce adaptive navigation policies directly from interaction and local sensory observations, but trial-and-error exploration can generate unsafe actions during both training and deployment \citep{brunke2022safe,guerrier2024learning,xu2022benchmarking}. Reward penalties can discourage collisions or constraint violations, yet they provide only an indirect learning signal and do not ensure that each executed action satisfies a prescribed safety condition \citep{krasowski2022provably}. For safety-critical robotic navigation, a useful learning architecture should therefore protect the system during interaction while still allowing the policy to improve from the intervention it receives.

Control barrier functions (CBFs) provide a principled way to enforce state constraints by minimally modifying a nominal control input so that a safe set remains forward invariant \citep{ames2019control}. This reactive view, however, is incomplete under bounded actuation. A system can satisfy all instantaneous geometric constraints and still enter a state from which the available control authority is no longer sufficient to prevent a future violation. Backup CBFs (bCBFs) address this distinction by asking a stronger question: \emph{if the nominal behavior becomes unsafe, does a prescribed recovery maneuver remain feasible over a finite horizon and reach a terminal safe set?} \citep{gurriet2018online,chen2021backup}. This notion of future recoverability is particularly relevant to learning-based navigation because the policy should ideally avoid actions that are safe only momentarily but eliminate subsequent braking or avoidance authority.

The challenge is to make this predictive notion of safety practical inside RL training. A sampled bCBF introduces many simultaneous conditions: current-state safety, safety at multiple future backup states, terminal recoverability, and bounded-input constraints. Direct enforcement therefore leads to a multi-constraint online optimization that must be solved at every environment interaction. Existing work has separately developed smooth CBF composition, closed-form CBF filters, backup-based safe sets, and differentiable safety layers \citep{rabiee2025soft,wang2025multi,van2026safety,yun2026provably}. What remains important for learning is a formulation that preserves finite-horizon backup recoverability, handles bounded commands, collapses the resulting constraint collection to an analytical minimum-intervention map, and exposes the exact filter Jacobian to the task policy during training.

We introduce \textbf{SAFTER}, a recoverability-aware safety-filtered RL framework built around this objective. Figure~\ref{fig:architecture} summarizes the central idea. The actor proposes a task-oriented nominal action. SAFTER predicts a prescribed backup maneuver, composes current, future, terminal, and input-admissibility conditions into a single differentiable certificate, and projects the nominal command through an exact closed-form safety map. During training, the filtered action is the action actually executed and evaluated by the critic, while the actor differentiates through the analytical safety correction. During deployment, the actor is frozen and the same filter remains active. Thus, the safety layer is used not only to shield the robot but also to shape what the policy learns.

Our contributions are threefold. \textbf{First}, we construct a recoverability-aware composite certificate that combines current-state HOCBF conditions, sampled backup-horizon safety, terminal recoverability, and bounded-input admissibility after control-input augmentation. The construction reduces the sampled predictive-safety conditions to one affine CBF constraint whose exact minimum-intervention solution is available in closed form. \textbf{Second}, we embed this analytical filter directly into SAC so that the critic learns from executed filtered actions and the actor receives gradients through the exact recoverability-aware filter mapping. \textbf{Third}, controlled quadruped experiments separate learning efficiency, policy internalization, distribution-shift robustness, and filter-gradient effects. The results show faster transition-based convergence in a matched moderate-speed study and substantially improved navigation success in unseen, geometrically shifted environments; ablations identify the analytical filter Jacobian as an important learning mechanism.

\section{Related Work}
\textbf{Safety-filtered reinforcement learning.}
Safety layers and shielding mechanisms separate task learning from explicit safety enforcement, while constrained and safety-critic formulations provide alternative mechanisms for safe policy learning
\citep{chow2021safe,hsu2023isaacs,yang2023safety}.
Reward shaping can bias policies away from unsafe behavior \citep{ranjan2024barrier}, while explicit locomotion constraints can reduce reward engineering \citep{kim2024not}.
Training-time safety filtering has also been shown to help the policy account for filter interventions and improve sample efficiency \citep{bejarano2024safety,van2024safe}.
Reachability-based approaches can switch between task and recovery policies \citep{he2024agile}, while observation-conditioned reachability filtering has been developed for quadruped navigation in previously unknown environments \citep{lin2025one}.
In legged navigation, SEA--Nav learns safe and agile LiDAR-based locomotion \citep{chen2026sea}, and CBF--RL combines closed-form CBF filtering with barrier-inspired reward shaping so that the task policy can internalize safety behavior \citep{yang2025cbf}.
These approaches motivate training-time safety integration but do not directly provide sampled finite-horizon backup recoverability through the task-policy action map.

\textbf{Backup CBFs and predictive safety.}
Backup CBFs characterize recoverability through the flow of a prescribed backup controller, beginning with active-set invariance formulations \citep{gurriet2018online} and later extensions using time-varying backups \citep{singletary2022safe}.
RLBUS learns a neural backup policy to enlarge the finite-time safe backward image \citep{rabiee2023safe}; PS2-RL couples task-policy learning with a differentiable backup-CBF projection and considers both learned and analytical backup constructions \citep{yun2026provably}.
Backup-CBF filtering has also been developed for bounded-input and higher-order systems \citep{van2026safety,lin2026safety}.
SAFTER instead combines an augmented composite backup certificate with an explicit projection in surrogate-control space and differentiates through this map during off-policy task-policy learning.

\textbf{Composite and explicit safety filters.}
LSE composition has been used to combine multiple current-state CBF constraints and obtain a closed-form safety filter for end-to-end RL \citep{wang2025multi}.
CBF-based quadruped navigation has also used smooth LiDAR-derived safety functions with online QP filtering \citep{unlu2024control}.
For backup barriers, smooth minimum/maximum constructions aggregate finite-horizon conditions while retaining optimization-based control synthesis \citep{rabiee2025soft}.
\citet{van2026safety} derive a closed-form bounded-input bCBF controller by interpolating between nominal and backup controls, restricting the correction to that interpolation direction.
SAFTER instead constructs a single composite affine constraint in an augmented surrogate-input space and applies the exact minimum-Euclidean correction in the full surrogate control space.
The differentiable mapping is then used directly in the SAC actor update.


\section{Recoverability-Aware Safe RL: Problem Formulation}
\label{sec:preliminaries}

\textbf{Safety-filtered Soft Actor-Critic RL.}
\label{sec:safe_rl}
We consider continuous-action reinforcement learning with environment state
$s_t$, policy observation $o_t$, and physical command $u_t$. The task policy
is trained using Soft Actor-Critic (SAC) \citep{haarnoja2018soft}. At policy
step $t$, the actor proposes a nominal command
$u_t^{\mathrm{nom}}\sim\pi_\phi(\cdot\mid o_t)$, which is transformed before
execution by $u_t^{\mathrm{safe}} = \mathcal F_\eta(\xi_t,u_t^{\mathrm{nom}}),$ where $\xi_t$ contains the state and sensing information required by the filter and $\eta$ denotes its fixed design parameters. Because the executed
command generates the environment transition, it is the action stored in
replay and evaluated by the SAC critics.
During the actor update, SAFTER also differentiates through $\mathcal F_\eta$, allowing the nominal policy to adapt
to the safety intervention it experiences.

The safety filter doesn't need the complete RL environment.
We use $x\in\mathbb R^n$ for the continuous state required by the safety model and $u\in\mathcal U$ for its physical command, with local dynamics $\dot x=f(x)+g(x)u$. Thus, $s_t$ is the RL environment state, $o_t$ is the actor observation, and $x$ is the reduced state used to evaluate safety.

\textbf{From current-state safety to backup recoverability.}
\label{sec:cbf_bcbf}
Let $\mathcal C_S=\{x:h_i(x)\geq0,\ i=1,\ldots,N_h\}$ denote the prescribed
safe set. For a differentiable relative-degree-one barrier $h$, a standard
minimum-intervention CBF filter solves
\begin{equation}
\begin{aligned}
    u^{\mathrm{safe}}
    ={}&\arg\min_{u\in\mathcal U}
        \frac12\|u-u^{\mathrm{nom}}\|_2^2\\
    \mathrm{s.t.}\quad&
        L_fh(x)+L_gh(x)u+\alpha(h(x))\geq0,
\end{aligned}
\label{eq:standard_cbf_qp}
\end{equation}
where $L_fh(x)=\nabla h(x)^\top f(x)$ and
$L_gh(x)=\nabla h(x)^\top g(x)$ are the Lie derivatives of $h$, and
$\alpha$ regulates approach to the safety boundary. Under the usual
regularity and feasibility assumptions, the CBF constraint renders the
corresponding safe set forward invariant \citep{ames2019control}; the
objective leaves the nominal command unchanged whenever it already satisfies the
input and CBF constraints.

A conventional CBF constrains the instantaneous safety-margin dynamics, but
with bounded actuation this need not preserve future recoverability. Let
$u_{\mathrm b}(x)$ be a prescribed backup controller,
$\varphi_{\mathrm b}(\tau;x)$ its flow, and
$\mathcal C_B^x=\{x:h_{B,j}(x)\geq0,\ j=1,\ldots,N_B\}\subseteq\mathcal C_S$
a terminal set invariant under $u_{\mathrm b}$~\citep{gurriet2018online,chen2021backup}. Finite-horizon recoverability
requires
\[
    h_i\!\left(\varphi_{\mathrm b}(\tau;x)\right)\geq0,
    \quad \forall \tau\in[0,T],\ \forall i,
    \qquad
    h_{B,j}\!\left(\varphi_{\mathrm b}(T;x)\right)\geq0,
    \quad \forall j.
\]
The backup trajectory must therefore remain safe while reaching a terminal
state from which safety can subsequently be maintained. Evaluating the
trajectory at $0<\tau_1<\cdots<\tau_K=T$ produces $N_hK$ future safety
conditions in addition to current and terminal conditions. Direct enforcement
therefore leads to the sampled multi-constraint bCBF--QP described in
Appendix~\ref{app:sampled_bcbf_qp}: recoverability provides predictive safety
information unavailable to an instantaneous CBF, but substantially enlarges
the online safety problem.

\textbf{Quadruped-navigation example.}
In our experiments, $s_t$ contains the full MuJoCo state, while the SAC actor
observes local LiDAR, goal information, measured body velocity, and command
history. The safety model uses the scan-fixed planar pose
$x=[p_x,p_y,\theta]^\top$ and high-level command
$u=[v_x,v_y,\omega_z]^\top$. Each valid LiDAR return defines an obstacle
point and barrier $h_i(x)$. A command can satisfy all current clearance
constraints yet leave too little distance to stop before collision. The
prescribed backup therefore drives the commanded motion toward rest and tests
whether this stopping maneuver remains collision-free over the horizon. With
$N_h=120$ rays and $K=20$ positive-time samples, the filter evaluates $2400$
future geometric barriers before terminal and input-admissibility conditions
are included. Appendix~\ref{app:quadruped_safety} gives the full construction.

\label{sec:problem_formulation}
\begin{problem}[Recoverability-aware differentiable safe RL]
\label{prob:safer}
Given $\pi_\phi$, the safety model $(f,g)$, safe set $\mathcal C_S$, bounded
input set $\mathcal U$, backup controller $u_{\mathrm b}$, and terminal set
$\mathcal C_B^x$, construct $\mathcal F_\eta$ such that the executed command:
(i) satisfies the encoded current, sampled backup, terminal, and input
conditions whenever feasible; (ii) minimally modifies the nominal task
command; and (iii) can be evaluated efficiently and differentiated during
SAC training.
\end{problem}

Direct sampled bCBF enforcement satisfies the predictive requirement but
retains a multi-constraint optimization inside every RL interaction.
Section~\ref{sec:proposed_method} develops SAFTER to remove this bottleneck by
combining bounded-command augmentation, finite-horizon backup prediction, a
single composite certificate, and an analytical minimum-intervention
projection whose local Jacobian can be used directly in policy optimization.

\section{Recoverability-Aware Differentiable Safety Filtering}
\label{sec:proposed_method}

SAFTER realizes Problem~\ref{prob:safer} through the analytical pipeline in
Figure~\ref{fig:architecture}. It augments the bounded physical command with
first-order dynamics, evaluates finite-horizon backup recoverability in the
augmented model, and combines the resulting current, future, terminal, and
input-admissibility margins into one differentiable certificate. The CBF
condition of this certificate is a single affine constraint in a surrogate
input, yielding an exact minimum-intervention projection and an exact local
Jacobian wherever the filter is differentiable.

The central technical contribution is this reduction from sampled
multi-constraint backup-CBF enforcement to a single analytical projection
after the constituent safety margins have been evaluated. Thus, the number
of backup samples and primitive barriers no longer determines the number of
constraints in the online optimization, while the same recoverability-aware
action map can be differentiated through during SAC training.

\begin{figure}[t]
    \centering
    \includegraphics[width=\linewidth]{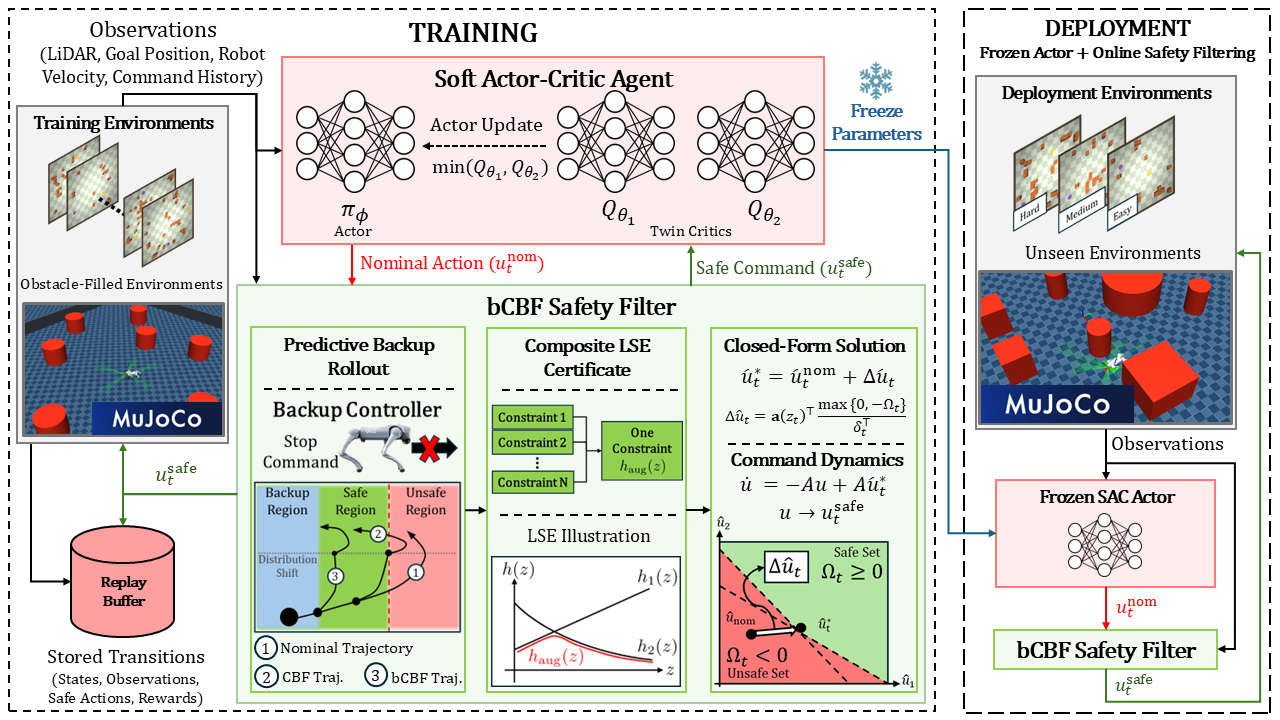}
    \caption{Overview of SAFTER. A task policy proposes a nominal command; a
    prescribed backup rollout evaluates finite-horizon recoverability;
    current, future, terminal, and input-admissibility conditions are combined
    into one differentiable certificate; and an analytical minimum-intervention
    projection determines the safety correction before execution. During
    training, the critic evaluates the executed action and the actor
    differentiates through the same safety map.}
    \label{fig:architecture}
\end{figure}

\textbf{Bounded-command augmentation and current safety.}
\label{sec:augmentation_lifting}
Section~\ref{sec:preliminaries} uses the safety model
$\dot x=f(x)+g(x)u$ with bounded physical command $u\in\mathcal U$.
Following the control-dynamics construction of \citet{rabiee2024closed}, we
introduce a surrogate input $\hat u\in\mathbb R^m$ through
$\dot u=-Au+A\hat u$, where
$A=\operatorname{diag}(a_1,\ldots,a_m)\succ0$. Treating the applied command as
part of the state gives $z=[x^\top,u^\top]^\top$ and
\begin{equation}
    \dot z=\hat f(z)+\hat g(z)\hat u,
    \qquad
    \hat f(z)=
    \begin{bmatrix}
        f(x)+g(x)u\\[-1mm]
        -Au
    \end{bmatrix},
    \qquad
    \hat g(z)=
    \begin{bmatrix}
        0_{n\times m}\\[-1mm]
        A
    \end{bmatrix}.
    \label{eq:augmented_dynamics}
\end{equation}
For the component-wise bounds $u_{\min}\leq u\leq u_{\max}$, define
$I_d^+(z):=u_{\max,d}-u_d$ and
$I_d^-(z):=u_d-u_{\min,d}$, $d=1,\ldots,m$. Their nonnegativity represents
the physical command bounds as state constraints. Because $h_i$ depends on $x$ but not directly on the command state $u$,
$L_{\hat g}h_i=0$. We therefore use the lifted current-safety margin
\[
    \psi_i(z)
    :=L_fh_i(x)+L_gh_i(x)u+k_0h_i(x),
    \qquad
    L_{\hat g}\psi_i(z)=L_gh_i(x)A,
\]
with $k_0>0$. Hence, $\psi_i$ has relative degree one with respect to
$\hat u$ wherever $L_gh_i(x)A\neq0$ and can be combined with the predictive
and input-admissibility conditions below. Appendix~\ref{app:hocbf_relative_degree}
gives the relative-degree calculation, while Appendix~\ref{app:finite_rate_realization} describes the finite-rate command
update and the post-projection enforcement of the physical command limits.

\textbf{Predictive backup recoverability.}
\label{sec:predictive_backup}
Because SAFTER acts through $\hat u$, the backup maneuver $u_{\mathrm b}$ from
Section~\ref{sec:preliminaries} is represented in the augmented dynamics by a
prescribed surrogate backup input $\hat u_{\mathrm b}(z)$. Let
$\varphi_{\mathrm b}^{z}(\tau;z)$ denote the corresponding flow of
$\dot z=\hat f(z)+\hat g(z)\hat u_{\mathrm b}(z)$. With
$E_x=[I_n\;0_{n\times m}]$, the sampled future margins are
$
    h_i^{\mathrm b}(z,\tau_k)
    :=h_i\!\left(E_x\varphi_{\mathrm b}^{z}(\tau_k;z)\right).
$
After augmentation, terminal recoverability may also depend on the applied
command state. We therefore lift the terminal set to
$\mathcal C_B^z=\{z:h_{B,j}(z)\geq0,\ j=1,\ldots,N_B\}$ and define
$h_{B,j}^{\mathrm b}(z,T):=h_{B,j}(\varphi_{\mathrm b}^{z}(T;z))$.
The future margins test sampled safety along the recovery trajectory, while
the terminal margins test whether the predicted backup state reaches
$\mathcal C_B^z$. Their gradients require the sensitivity of the backup flow
to the current augmented state; Appendix~\ref{app:backup_sensitivity} gives
the corresponding variational dynamics.

For the quadruped specialization, $\hat u_{\mathrm b}=0$, so the first-order
command dynamics drive the applied translational and yaw-rate commands toward
zero. The backup test therefore asks whether sufficient obstacle clearance
remains to dissipate the current commanded motion.

\textbf{Composite recoverability certificate.}
\label{sec:composite_lse}
The backup construction above yields the current margins $\psi_i(z)$, sampled
future margins $h_i^{\mathrm b}(z,\tau_k)$, terminal margins
$h_{B,j}^{\mathrm b}(z,T)$, and input-admissibility margins $I_d^\pm(z)$.
After fixed positive rescaling, denoted by an overbar, SAFTER combines them
using the LSE smooth minimum \citep{molnar2023composing,rabiee2025soft},
\begin{equation}
    \softmin_{\kappa}(\mathcal S)
    :=
    -\frac{1}{\kappa}\log\!\sum_{c\in\mathcal S}e^{-\kappa c},
    \qquad
    \min\mathcal S-\frac{\log|\mathcal S|}{\kappa}
    \leq\softmin_{\kappa}(\mathcal S)\leq\min\mathcal S,
    \label{eq:lse_main}
\end{equation}
where $\kappa>0$ controls the approximation sharpness. Thus,
$\softmin_{\kappa}(\mathcal S)\geq0$ implies nonnegativity of every
constituent in $\mathcal S$. We aggregate recoverability and input admissibility separately,
\begin{align*}
    h_{\mathrm{rec}}(z)
    &:=\softmin_{\kappa_B}\!\left(
        \{\bar\psi_i(z)\}_i
        \cup\{\bar h_i^{\mathrm b}(z,\tau_k)\}_{i,k}
        \cup\{\bar h_{B,j}^{\mathrm b}(z,T)\}_j
    \right),\\
    h_{\mathrm{input}}(z)
    &:=\softmin_{\kappa_U}\!\left(
        \{\bar I_d^+(z),\bar I_d^-(z)\}_{d=1}^{m}
    \right).
\end{align*}
The complete certificate is
$
    h_{\mathrm{aug}}(z)
    :=\softmin_{\kappa_O}\!\left(
        \{h_{\mathrm{rec}}(z),h_{\mathrm{input}}(z)\}
    \right).
$
Hence, $h_{\mathrm{aug}}(z)\geq0$ implies all encoded current, sampled
backup, terminal, and input-admissibility conditions. The individual margins
must still be evaluated, but their enforcement is reduced to one
differentiable scalar certificate.

\textbf{Analytical minimum-intervention projection.}
The composite certificate $h_{\mathrm{aug}}$ now plays the role of the
barrier function in the CBF condition introduced in
\eqref{eq:standard_cbf_qp}. Applied to the augmented dynamics
\eqref{eq:augmented_dynamics}, this condition becomes
\begin{equation}
    L_{\hat f}h_{\mathrm{aug}}(z)
    +L_{\hat g}h_{\mathrm{aug}}(z)\hat u
    +\alpha_{\mathrm{aug}}h_{\mathrm{aug}}(z)
    \geq0,
    \label{eq:composite_affine_condition}
\end{equation}
where $\alpha_{\mathrm{aug}}>0$ is the composite CBF gain. Defining
$\mathbf a(z):=L_{\hat g}h_{\mathrm{aug}}(z)$ and
$b(z):=L_{\hat f}h_{\mathrm{aug}}(z)
+\alpha_{\mathrm{aug}}h_{\mathrm{aug}}(z)$,
\eqref{eq:composite_affine_condition} is simply
$\mathbf a(z)\hat u+b(z)\geq0$: a single half-space in the surrogate input.
Thus, the many sampled recoverability conditions composed above no longer
appear as separate constraints in the online projection.
The actor, however, proposes the physical command $u_t^{\mathrm{nom}}$ rather
than the surrogate input $\hat u$. To recover this nominal command when the
safety constraint is inactive, we use
\begin{equation}
    \hat u_t^{\mathrm{nom}}
    =
    u_t
    +A^{-1}\dot u_t^{\mathrm{nom}}
    +A^{-1}\Sigma
    \left(u_t^{\mathrm{nom}}-u_t\right),
    \qquad \Sigma\succ0,
    \label{eq:nominal_surrogate}
\end{equation}
which makes the applied command $u$ exponentially track
$u^{\mathrm{nom}}$ under the augmented dynamics. The derivation is given in
Appendix~\ref{app:nominal_tracking}. The safety filter therefore solves
\begin{equation}
\begin{aligned}
    \hat u_t^\star
    ={}
    \arg\min_{\hat u\in\mathbb R^m}
    \frac12\|\hat u-\hat u_t^{\mathrm{nom}}\|_2^2, \quad 
    \mathrm{s.t.} \quad
    \mathbf a(z_t)\hat u+b(z_t)\geq0 .
\end{aligned}
\label{eq:closed_form_qp}
\end{equation}
Let
$\Omega_t:=\mathbf a(z_t)\hat u_t^{\mathrm{nom}}+b(z_t)$
denote the composite constraint margin of the nominal surrogate, and $\delta_t := a(z_t)a(z_t)^\top=\|a(z_t)\|_2^2$ denote the squared constraint-normal magnitude.

\begin{proposition}[Closed-form safety projection]
\label{prop:closed_form_projection}
If $\delta_t>0$, the unique solution of \eqref{eq:closed_form_qp} is
\begin{equation}
    \hat{u}_t^\star = \hat{u}_t^{\mathrm{nom}} + \Delta\hat{u}_t
    =
    \hat{u}_t^{\mathrm{nom}}
    +
    \mathbf a(z_t)^\top
    \frac{\max\{0,-\Omega_t\}}{\delta_t}
    .
    \label{eq:closed_form_solution}
\end{equation}
\end{proposition}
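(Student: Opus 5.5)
The plan is to treat \eqref{eq:closed_form_qp} as the Euclidean projection of $\hat u_t^{\mathrm{nom}}$ onto the closed half-space $\mathcal H_t:=\{\hat u\in\mathbb R^m:\mathbf a(z_t)\hat u+b(z_t)\geq0\}$ and to verify the formula through the KKT conditions. Write $\mathbf a:=\mathbf a(z_t)$, a row vector, and $b:=b(z_t)$.

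\textbf{Existence and uniqueness.} Since $\delta_t=\|\mathbf a\|_2^2>0$, the set $\mathcal H_t$ is nonempty. For instance, $\hat u=-\mathbf a^\top b/\delta_t$ satisfies the constraint with equality. The set is also closed and convex. The objective $\tfrac12\|\hat u-\hat u_t^{\mathrm{nom}}\|_2^2$ is strictly convex and coercive, so a minimizer exists and is unique. The constraint is affine, so Slater's condition is not needed: the KKT conditions are necessary and sufficient for optimality. It therefore suffices to show that the proposed $\hat u_t^\star$ satisfies KKT for some multiplier $\lambda\geq0$.

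\textbf{KKT verification.} The Lagrangian is $\tfrac12\|\hat u-\hat u_t^{\mathrm{nom}}\|_2^2-\lambda(\mathbf a\hat u+b)$. Stationarity gives $\hat u=\hat u_t^{\mathrm{nom}}+\lambda\mathbf a^\top$. We then have $\mathbf a\hat u+b=\Omega_t+\lambda\delta_t$. Set $\lambda:=\max\{0,-\Omega_t\}/\delta_t\geq0$ and consider two cases.
\begin{itemize}
\item If $\Omega_t\geq0$, then $\lambda=0$ and $\hat u^\star=\hat u_t^{\mathrm{nom}}$. Primal feasibility holds because the constraint value is $\Omega_t\geq0$, and complementary slackness is trivial.
\item If $\Omega_t<0$, then $\lambda=-\Omega_t/\delta_t>0$, and the constraint value is $\Omega_t-\Omega_t=0$. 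The constraint is active, so complementary slackness $\lambda(\mathbf a\hat u^\star+b)=0$ holds.
\end{itemize}
In both cases stationarity, primal feasibility, dual feasibility and complementary slackness hold. Hence $\hat u^\star$ is optimal, and by the uniqueness step it is the unique solution. This gives \eqref{eq:closed_form_solution}.

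\textbf{Main obstacle.} There is no real difficulty here. The only care needed is to confirm that the hypothesis $\delta_t>0$ delivers both nonemptiness of the feasible set and well-definedness of the formula. If $\mathbf a=0$, the constraint reduces to $b\geq0$ and may be infeasible, which is why the proposition excludes that case.
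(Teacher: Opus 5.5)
Your proof is correct and follows the same route as the paper's appendix derivation: you form the Lagrangian with a nonnegative multiplier, use stationarity to get $\hat u=\hat u_t^{\mathrm{nom}}+\lambda\mathbf a(z_t)^\top$, and identify $\lambda=\max\{0,-\Omega_t\}/\delta_t$ through complementary slackness. Your added existence-and-uniqueness argument (strict convexity over a nonempty closed half-space) and your explicit case-by-case KKT check make rigorous the uniqueness claim that the paper's derivation leaves implicit.
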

If $\Omega_t\geq0$, the nominal surrogate already satisfies the composite
condition and is unchanged; otherwise,
\eqref{eq:closed_form_solution} applies the minimum-Euclidean correction along the certificate normal. The projection is exact for the resulting
half-space, and approximation enters only through finite backup sampling and the
LSE construction preceding it. The detailed proof of Proposition \ref{prop:closed_form_projection} is included in Appendix~\ref{app:kkt_derivation}.

\begin{theorem}[Continuous-time safety and recoverability]
\label{thm:continuous_composite_safety}
Under Assumption~\ref{ass:continuous_safety_conditions}, suppose
$z(0)\in\mathcal C_{\mathrm{aug}}$ and the composite CBF condition
\eqref{eq:composite_affine_condition} is enforced continuously. Then
$\mathcal C_{\mathrm{aug}}=\{z:h_{\mathrm{aug}}(z)\geq0\}$ is forward
invariant for the continuous augmented model. Consequently, the safety-model
state remains in the original safe set $\mathcal C_S$, the applied command
remains within its prescribed bounds, and the backup trajectory satisfies
the encoded safety conditions at the sampled future times and reaches the
terminal recovery set at the end of the horizon.
\end{theorem}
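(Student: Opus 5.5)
The proof has two parts. First we show forward invariance of $\mathcal C_{\mathrm{aug}}$ by the standard comparison-lemma argument. Then we unpack the nested LSE composition using the lower bound in \eqref{eq:lse_main}.

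\textbf{Invariance.} Assumption~\ref{ass:continuous_safety_conditions} presumably supplies several ingredients: $h_{\mathrm{aug}}$ is continuously differentiable (so $h_i$, $h_{B,j}$ and the backup flow $\varphi_{\mathrm b}^z$ are $C^1$, with sensitivities given by the variational dynamics of Appendix~\ref{app:backup_sensitivity}); the closed loop under the projected input \eqref{eq:closed_form_solution} is locally Lipschitz, so solutions exist and are unique; and the condition is feasible, e.g.\ $\delta_t>0$ or $b\ge0$ wherever $\mathbf a=0$. Along a closed-loop solution, the chain rule gives
\[
\tfrac{d}{dt}h_{\mathrm{aug}}(z(t))=L_{\hat f}h_{\mathrm{aug}}+L_{\hat g}h_{\mathrm{aug}}\hat u\ge -\alpha_{\mathrm{aug}}h_{\mathrm{aug}}(z(t)),
\]
since \eqref{eq:composite_affine_condition} is enforced. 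By the comparison lemma, $h_{\mathrm{aug}}(z(t))\ge e^{-\alpha_{\mathrm{aug}}t}h_{\mathrm{aug}}(z(0))\ge0$ on the maximal interval of existence. This yields forward invariance, with forward completeness either assumed or obtained from boundedness of the trajectory.

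\textbf{Unpacking the certificate.} Apply $\softmin_\kappa(\mathcal S)\le\min\mathcal S$ twice. First, $0\le h_{\mathrm{aug}}\le\min\{h_{\mathrm{rec}},h_{\mathrm{input}}\}$. Then $h_{\mathrm{rec}}\ge0$ forces every $\bar\psi_i,\bar h_i^{\mathrm b}(z,\tau_k),\bar h^{\mathrm b}_{B,j}(z,T)\ge0$, and $h_{\mathrm{input}}\ge0$ forces $\bar I_d^\pm\ge0$. Positive rescaling preserves sign, so the unbarred margins are nonnegative too. From this:
\begin{itemize}
\item $I_d^\pm\ge0$ gives $u_{\min}\le u(t)\le u_{\max}$.
\item $h_i^{\mathrm b}(z,\tau_k)\ge0$ and $h_{B,j}^{\mathrm b}(z,T)\ge0$ are exactly the sampled backup-safety and terminal-recoverability statements at each $t$.
\item Original-set safety comes from $\psi_i\ge0$, i.e.\ $\dot h_i=L_fh_i+L_gh_iu\ge -k_0h_i$ along the trajectory (because $\dot x=f+gu$). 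The comparison lemma again gives $h_i(x(t))\ge e^{-k_0t}h_i(x(0))$.
\end{itemize}

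\textbf{Main obstacle.} The last bullet requires $h_i(x(0))\ge0$, which $h_{\mathrm{aug}}(z(0))\ge0$ does not obviously provide. I would first try to derive it from the sampled margin at $\tau=0$. If the sampling set includes $\tau_0=0$, or current $h_i$ are otherwise included, then $h_i^{\mathrm b}(z,0)=h_i(x)$ and the issue disappears. Otherwise I would take $x(0)\in\mathcal C_S$ from the assumption, or argue via $\mathcal C_B^z$ and continuity of the flow as $\tau_1\to0$; failing both, I would state it as a standing hypothesis.

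A secondary subtlety is that the applied input is the continuous-time projection of \eqref{eq:closed_form_qp}. Its well-posedness needs $\delta>0$ near the boundary $\{h_{\mathrm{aug}}=0\}$ so that the feedback is locally Lipschitz. I expect this to be part of Assumption~\ref{ass:continuous_safety_conditions}, and it should be invoked explicitly.
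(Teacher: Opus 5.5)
Your proposal is correct and follows the paper's proof. Invariance of $\mathcal C_{\mathrm{aug}}$ comes from the continuously enforced composite CBF condition (the paper simply cites the standard result where you spell out the comparison argument). The nested LSE is then unpacked via $\softmin_\kappa(\mathcal S)\le\min\mathcal S$ and sign-preserving rescaling, giving the input bounds, the sampled backup and terminal margins, and $h_i(x(t))\ge e^{-k_0t}h_i(x(0))$ from $\psi_i\ge0$. The obstacle you flag is resolved by your fallback, since Assumption~\ref{ass:continuous_safety_conditions}(iv) explicitly posits $h_i(x(0))\ge0$ for every primitive barrier. The paper also invokes (iii), but only to add that continuing the backup beyond $T$ stays in $\mathcal C_B^z$ and hence in $\mathcal C_S$.
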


The result applies to the continuous safety model and to backup safety at the
selected horizon samples. The finite-rate implementation, inter-sample
backup behavior, sensing updates, and reduced-model assumptions are discussed
in Appendices~\ref{app:finite_rate_exp} and~\ref{app:guarantee_scope}.

\begin{wrapfigure}{r}{0.5\columnwidth}
    \centering
    \includegraphics[width=1\linewidth]{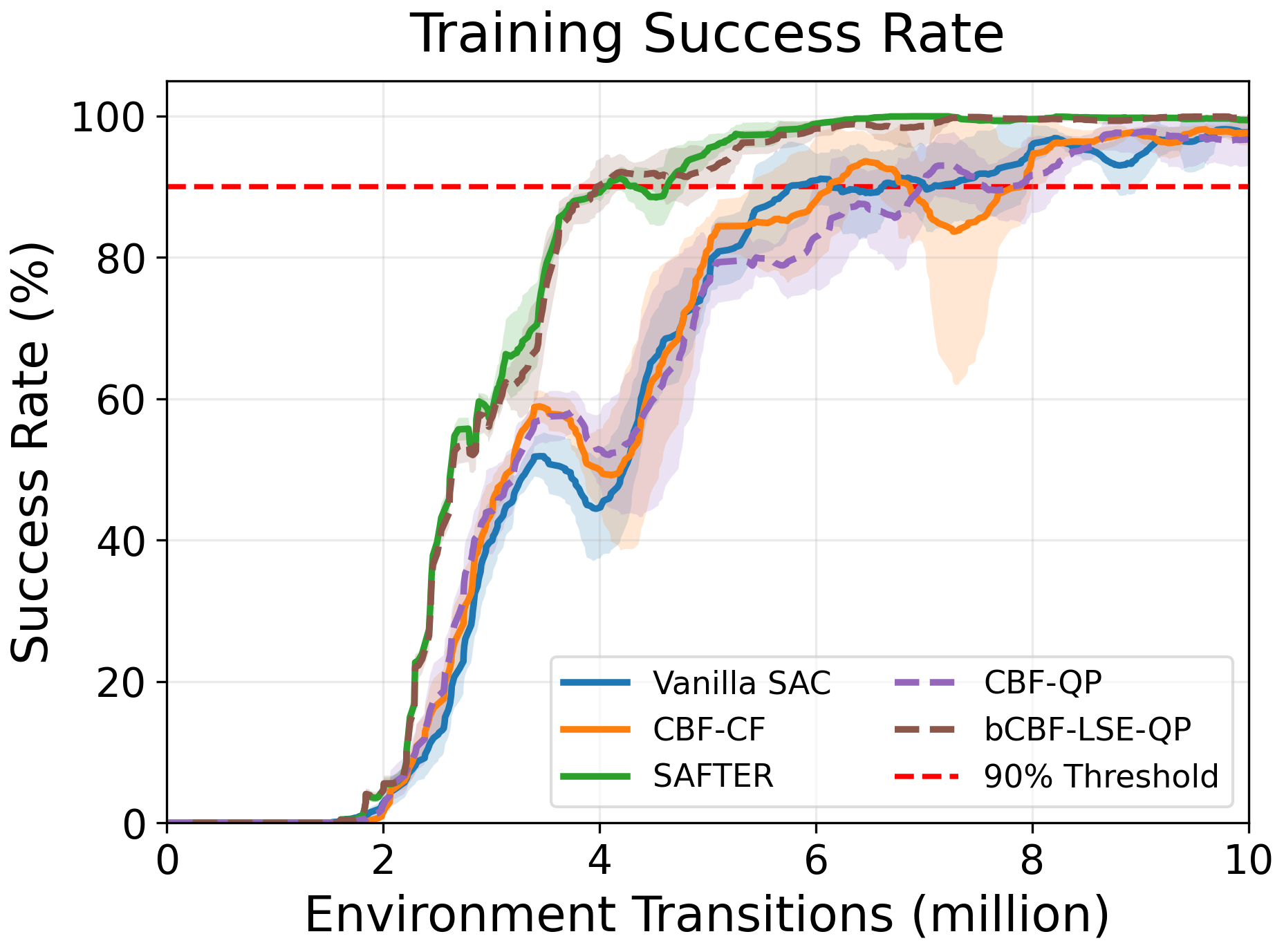}
    \caption{Training success rates of Vanilla SAC, CBF--QP/CBF--CF, bCBF--LSE--QP, and SAFTER under the same training environments.}
    \label{fig:training_SR}
\end{wrapfigure}
\textbf{SAFTER--SAC integration.}
\label{sec:sac_integration}
As illustrated in Figure~\ref{fig:architecture}, SAFTER places the
recoverability-aware safety map $\mathcal F_\eta$ between the SAC actor and
the environment. At each step, the actor proposes $u_t^{\mathrm{nom}}$,
the safety map produces the executed command $u_t^{\mathrm{safe}}$, and the
environment transition is generated by this filtered action. Accordingly,
$u_t^{\mathrm{safe}}$ is stored in replay and evaluated by the critics, so
value learning is based on the action that actually generated the transition.
The same filtering is applied to next actions when constructing SAC targets.

During the actor update, the critic evaluates $Q(o_t,\mathcal F_\eta(\xi_t,u_t^{\mathrm{nom}}))$, and gradients propagate through the same finite-rate map used during execution: nominal-surrogate construction, analytical half-space projection, command update, and component-wise clipping.
This map is differentiable away from projection and clipping switching boundaries, allowing actor updates to account for how safety interventions modify nominal commands.
Target filtering uses the same forward map without gradient propagation, while the SAC entropy term remains defined on the nominal policy.
The full SAC implementation details are reported in Appendix~\ref{app:sac_details}.


\section{Experiments and Analysis}
\begin{figure*}[t]
\centering
\begin{subfigure}[c]{0.19\textwidth}
    \centering
    \includegraphics[width=\linewidth]{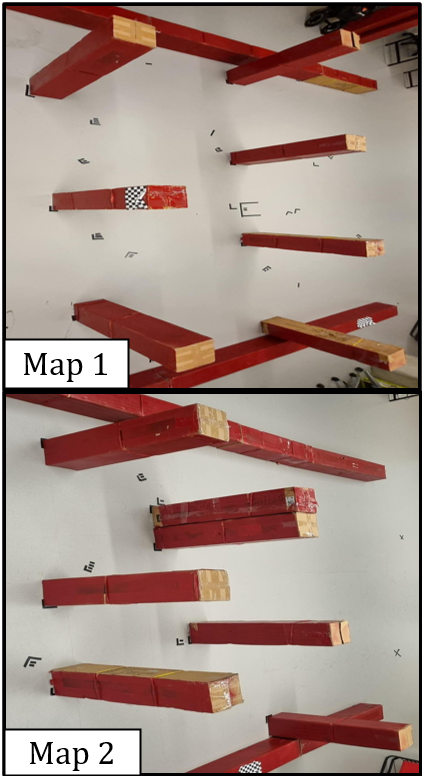}
    \caption{Real-world evaluation maps.}
    \label{fig:real_maps}
\end{subfigure}
\hfill
\begin{subfigure}[c]{0.79\textwidth}
    \centering
    \includegraphics[width=\linewidth]{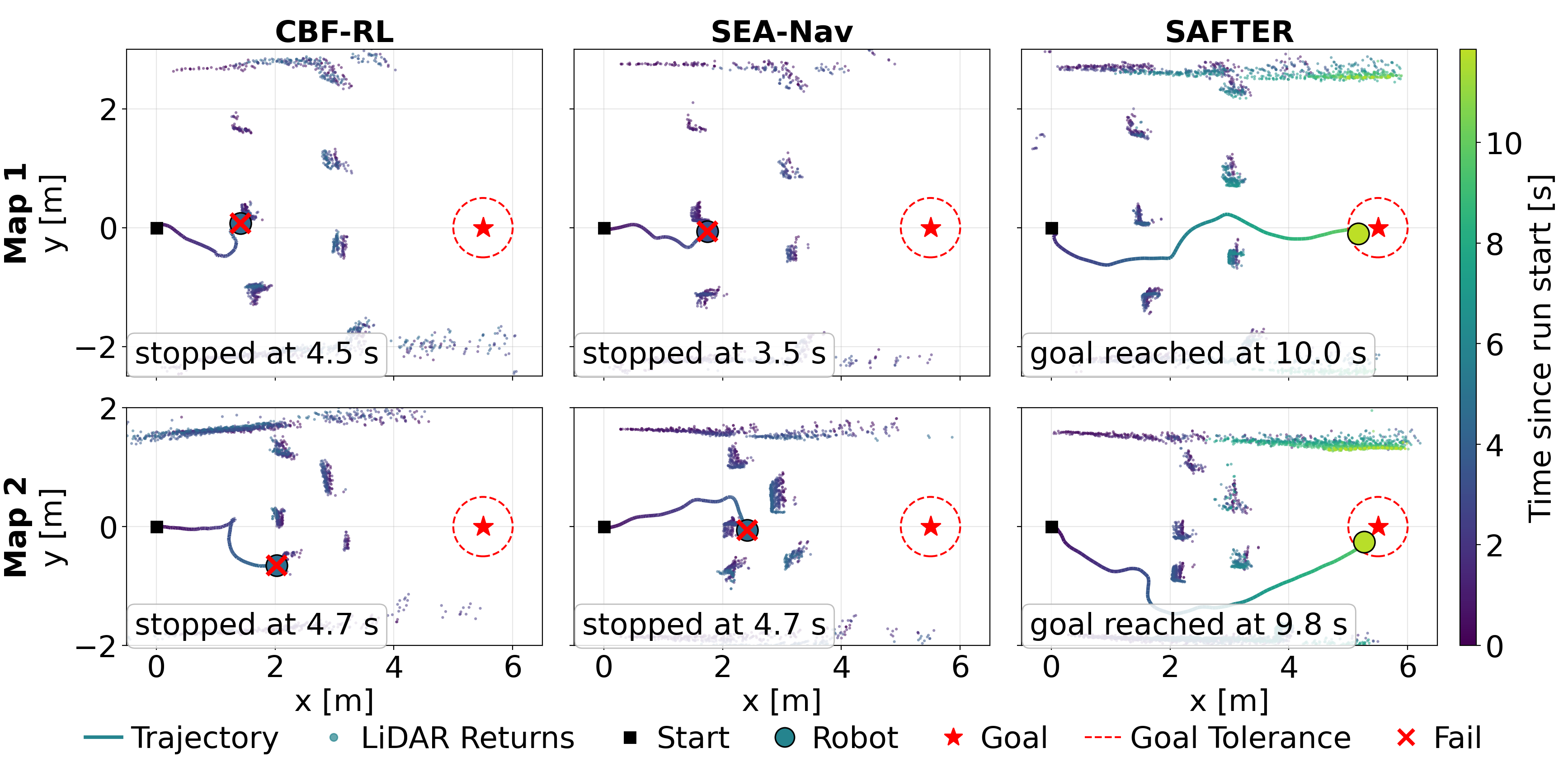}
    \caption{Trajectory sample and LiDAR readings of the real robot experiments.}
    \label{fig:real_trajectories}
\end{subfigure}
\caption{Real-world map with two different obstacle configurations, and trajectory sample and LiDAR readings taken from the Unitree Go2 on-board state estimator and LiDAR.}
\end{figure*}

\label{sec:experiments}
The experiments are organized to test the three mechanisms isolated in the problem formulation.
\textbf{RQ1} asks whether finite-horizon backup recoverability, rather than current-state filtering alone, improves transition-based learning efficiency, and whether its analytical realization reduces safety-filter latency relative to QP-based synthesis.
\textbf{RQ2} asks whether the resulting policy--filter system remains effective as obstacle geometry shifts away from the training maps.
\textbf{RQ3} asks whether learning through the exact analytical projection Jacobian matters when the forward safety-filter mapping is unchanged.

The baselines and SAFTER are evaluated in MuJoCo~3.3.6 using a Unitree Go2 with a frozen $50\,\mathrm{Hz}$ low-level locomotion policy.
The high-level policy operates at $10\,\mathrm{Hz}$ using local LiDAR, goal, measured-velocity, and command-history observations.
SAC-based methods use 256 parallel environments, five independent training seeds, and a maximum budget of $20.48$M transitions per seed.
In both training and evaluation, failure denotes episode termination caused by obstacle collision, falling, or contact with the environment boundary wall.
The bCBF--LSE--QP baseline uses the same composite half-space, backup model, terminal set, command dynamics, and finite-rate realization as SAFTER.

\begin{table}[t]
\centering
\caption{Performance on the training-map distribution with each method's runtime safety filter disabled or enabled. SR, TR, and FR denote success, timeout, and failure rates, respectively. Values are mean $\pm$ standard deviation over evaluation seeds and maps.}
\label{tab:training_maps_filter_ablation}
\small
\setlength{\tabcolsep}{4.0pt}
\renewcommand{\arraystretch}{1.08}
\begin{tabular}{lcccccc}
\toprule
& \multicolumn{3}{c}{\textbf{Filter disabled}} & \multicolumn{3}{c}{\textbf{Filter enabled}} \\
\cmidrule(lr){2-4} \cmidrule(lr){5-7}
\textbf{Method}
& \textbf{SR $\uparrow$} & \textbf{TR $\downarrow$} & \textbf{FR $\downarrow$}
& \textbf{SR $\uparrow$} & \textbf{TR $\downarrow$} & \textbf{FR $\downarrow$} \\
\midrule

SEA--Nav\textsuperscript{1}
& $89.7{\pm}2.4$ & $0.0{\pm}0.0$ & $10.3{\pm}2.4$
& $91.2{\pm}2.9$ & $0.0{\pm}0.0$ & $8.8{\pm}2.9$ \\

CBF--RL\textsuperscript{2}
& $84.3{\pm}8.2$ & $5.9{\pm}5.2$ & $9.8{\pm}7.1$
& $84.8{\pm}7.4$ & $5.9{\pm}5.2$ & $9.3{\pm}6.6$ \\

CBF--CF\textsuperscript{3}
& $76.6{\pm}13.8$ & $2.7{\pm}6.3$ & $20.7{\pm}15.8$
& $90.1{\pm}8.2$ & $1.9{\pm}4.5$ & $8.0{\pm}8.5$ \\

SAFTER\textsuperscript{4}
& $\mathbf{95.1{\pm}7.1}$ & $0.0{\pm}0.0$ & $\mathbf{4.9{\pm}7.1}$
& $\mathbf{99.7{\pm}1.2}$ & $0.0{\pm}0.0$ & $\mathbf{0.3{\pm}1.2}$ \\

\bottomrule
\end{tabular}
\par
{\scriptsize
\textit{Refs.:}
\textsuperscript{1}~\citep{chen2026sea};
\textsuperscript{2}~\citep{yang2025cbf};
\textsuperscript{3}~\citep{rabiee2024closed};
\textsuperscript{4}~\textbf{Ours}.
}

\vspace{-3mm}
\end{table}

\textbf{RQ1: Learning Efficiency and Policy Internalization.}
For the controlled convergence study, sample efficiency is measured by the first attainment of $90\%$ training success for each seed, using a 20-sample trailing rolling mean; the threshold is counted at its first crossing and is not required to persist.
As shown in Figure~\ref{fig:training_SR}, the two backup-recoverability methods converge earliest: SAFTER reaches the threshold after $4.06{\pm}0.08$M transitions and bCBF--LSE--QP after $4.13{\pm}0.43$M.
SAFTER therefore requires $30.6\%$ fewer transitions than Vanilla SAC and $34.1\%$--$35.1\%$ fewer than the current-state CBF baselines.
The similar convergence of SAFTER and bCBF--LSE--QP indicates that the sample-efficiency gain is primarily associated with predictive backup recoverability rather than the analytical realization alone.

Recoverability-aware training also appears to shape the learned nominal policy rather than relying solely on runtime filtering.
To examine this effect, we evaluate the trained policies on the training-map distribution with their runtime safety filters enabled and disabled.
Among the methods reported in Table~\ref{tab:training_maps_filter_ablation}, SAFTER achieves $95.1\%$ success without the runtime filter and $99.7\%$ with it.
The strong filter-disabled performance suggests that recoverability-aware training promotes nominal behavior compatible with the prescribed backup strategy.

\begin{table}[t]
\centering
\caption{
Safety-filter runtime comparison.
Total latency includes certificate construction and synthesis, and speedup is
measured relative to the direct bCBF--QP.
Values are mean $\pm$ standard deviation over 300 timed repetitions.
}
\label{tab:runtime_benchmark}

\small
\setlength{\tabcolsep}{3.0pt}
\renewcommand{\arraystretch}{1.08}

\begin{tabular}{lcccccc}
\toprule
& \multicolumn{3}{c}{\textbf{Batch = 1}}
& \multicolumn{3}{c}{\textbf{Batch = 100}} \\
\cmidrule(lr){2-4}
\cmidrule(lr){5-7}

\textbf{Method}
& \shortstack{\textbf{Total}\\\textbf{(ms)}}
& \shortstack{\textbf{Synthesis}\\\textbf{(ms)}}
& \shortstack{\textbf{Speedup}\\\textbf{($\times$)}}
& \shortstack{\textbf{Total}\\\textbf{(ms)}}
& \shortstack{\textbf{Synthesis}\\\textbf{(ms)}}
& \shortstack{\textbf{Speedup}\\\textbf{($\times$)}} \\
\midrule

bCBF--QP
& $30.546{\pm}0.324$
& $28.369{\pm}0.194$
& $1.00$
& $2504.789{\pm}22.463$
& $2487.383{\pm}16.949$
& $1.00$ \\

bCBF--LSE--QP
& $5.732{\pm}0.456$
& $3.342{\pm}0.405$
& $5.33$
& $7.084{\pm}0.569$
& $4.588{\pm}0.481$
& $353.58$ \\

SAFTER
& $\mathbf{2.495{\pm}0.312}$
& $\mathbf{0.102{\pm}0.011}$
& $\mathbf{12.24}$
& $\mathbf{2.529{\pm}0.283}$
& $\mathbf{0.106{\pm}0.014}$
& $\mathbf{990.43}$ \\

\bottomrule
\end{tabular}
\vspace{-3mm}
\end{table}

To evaluate computational efficiency, Table~\ref{tab:runtime_benchmark} compares SAFTER with the matched bCBF--LSE--QP and the direct unaggregated
bCBF--QP.
All methods are evaluated on identical fixed filter inputs using 30 warm-up calls and 300 timed repetitions on an NVIDIA RTX 4090 GPU with CUDA synchronization around each timed region.
SAFTER achieves the lowest total latency at both batch sizes, providing $12.24\times$ and $990.43\times$ lower latency than the direct bCBF--QP at batch sizes $1$ and $100$, respectively.
Relative to the matched bCBF--LSE--QP, SAFTER is $2.30\times$ and $2.80\times$ faster.

\begin{figure}[t]
    \centering
    \includegraphics[width=0.75\linewidth]{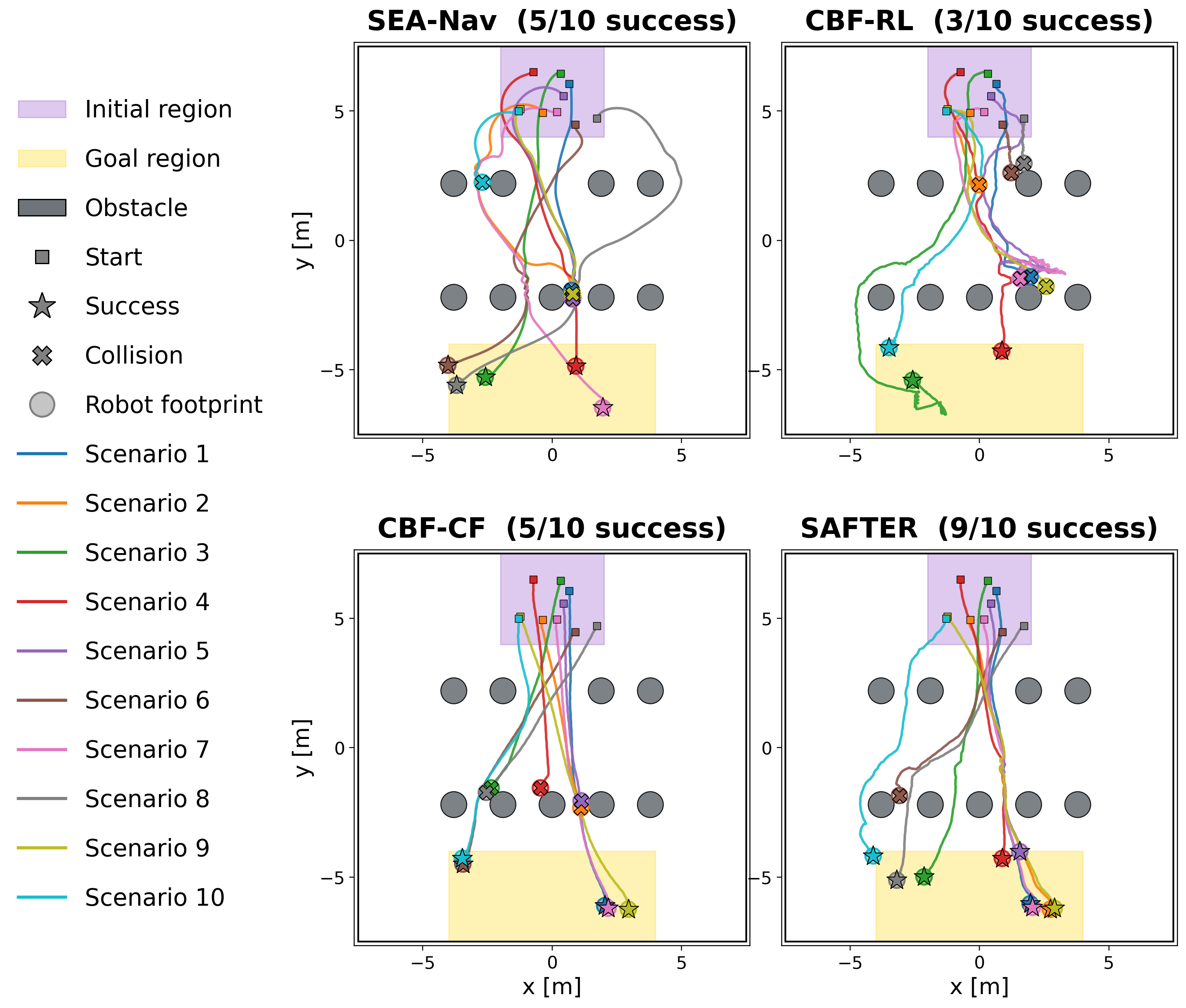}
    \caption{Representative trajectories in the simulated Medium-Shift environment.
    Purple and yellow regions denote the initial and goal regions, squares mark initial positions, stars successful terminations, crosses collision terminations, and faint circles indicate the robot footprint.
    }
    \label{fig:sample_trajectory}
    \vspace{-3mm}
\end{figure}

\textbf{RQ2: Robustness under Geometric Distribution Shift.}
Predictive backup recoverability supports robust navigation under geometric shift in simulation and real-world deployment.
Each fixed checkpoint is evaluated over 500 episodes per unseen simulation environment and 30 trials per real-world map.
Trajectories in Figures~\ref{fig:sample_trajectory} and~\ref{fig:real_trajectories} illustrate navigation behavior in simulation and real-world experiments.
The real-world and simulated examples show SAFTER taking wider detours with greater obstacle clearance, consistent with preserving room for backup recovery.
The control actions generated by SAFTER produce a safer navigation in an environment under distribution shift, as shown by the overall results reported in Table~\ref{tab:eval_filter_active}, where SAFTER achieves $98.8\%$, $88.6\%$, and $61.4\%$ success on Low-, Medium-, and High-Shift, outperforming the baseline methods.
On the two real-world layouts in Figure~\ref{fig:real_maps}, SAFTER achieves $73.3\%$ and $76.7\%$ success, exceeding the strongest evaluated baselines by $10.0$ and $60.0$ percentage points, respectively.
Moreover, additional experimental details and results are reported in Appendix~\ref{app:experiments}.

\begin{table}[t]
    \centering
    \caption{Navigation performance in simulation and real-world experiments.
    SR, TR, and FR denote success, timeout, and failure rates (\%), respectively.
    Simulation values are mean $\pm$ standard deviation across 25 evaluation-seed rates.
    Real-world success rates are measured over 30 trials per map.
    Dashes indicate methods not evaluated on hardware.}
    \label{tab:eval_filter_active}

    \scriptsize
    \setlength{\tabcolsep}{2pt}
    \renewcommand{\arraystretch}{1.10}

    \begin{adjustbox}{max width=\linewidth}
    \begin{tabular}{
        @{}l
        ccc
        @{\hspace{5pt}}ccc
        @{\hspace{5pt}}ccc
        @{\hspace{7pt}}cc@{}
    }
        \toprule
        & \multicolumn{9}{c}{\textbf{Simulation}}
        & \multicolumn{2}{c}{\textbf{Real world}} \\
        \cmidrule(lr){2-10}
        \cmidrule(lr){11-12}

        & \multicolumn{3}{c}{\textbf{Low-Shift}}
        & \multicolumn{3}{c}{\textbf{Medium-Shift}}
        & \multicolumn{3}{c}{\textbf{High-Shift}}
        & \textbf{Map 1}
        & \textbf{Map 2} \\
        \cmidrule(lr){2-4}
        \cmidrule(lr){5-7}
        \cmidrule(lr){8-10}

        \textbf{Method}
        & \textbf{SR $\uparrow$}
        & \textbf{TR $\downarrow$}
        & \textbf{FR $\downarrow$}
        & \textbf{SR $\uparrow$}
        & \textbf{TR $\downarrow$}
        & \textbf{FR $\downarrow$}
        & \textbf{SR $\uparrow$}
        & \textbf{TR $\downarrow$}
        & \textbf{FR $\downarrow$}
        & \textbf{SR $\uparrow$}
        & \textbf{SR $\uparrow$} \\
        \midrule

        SEA--Nav\textsuperscript{1}
        & $93.8{\pm}6.7$ & $0.0{\pm}0.0$ & $6.2{\pm}6.7$
        & $50.6{\pm}10.9$ & $0.0{\pm}0.0$ & $49.4{\pm}10.9$
        & $0.0{\pm}0.0$ & $0.0{\pm}0.0$ & $100.0{\pm}0.0$
        & $56.7$ & $13.3$ \\

        CBF--RL\textsuperscript{2}
        & $61.2{\pm}10.7$ & $13.6{\pm}6.7$ & $25.2{\pm}9.8$
        & $46.2{\pm}11.6$ & $17.2{\pm}7.6$ & $36.6{\pm}9.2$
        & $31.0{\pm}10.0$ & $15.4{\pm}8.9$ & $53.6{\pm}12.2$
        & $63.3$ & $0.0$ \\

        \midrule

        Vanilla SAC\textsuperscript{3}
        & $34.2{\pm}10.1$ & $0.0{\pm}0.0$ & $65.8{\pm}10.1$
        & $20.8{\pm}6.9$ & $0.0{\pm}0.0$ & $79.2{\pm}6.9$
        & $17.4{\pm}8.3$ & $0.0{\pm}0.0$ & $82.6{\pm}8.3$
        & $10.0$ & $0.0$ \\

        CBF--QP\textsuperscript{4}
        & $58.8{\pm}9.4$ & $0.2{\pm}1.0$ & $41.0{\pm}9.2$
        & $43.2{\pm}11.0$ & $0.0{\pm}0.0$ & $56.8{\pm}11.0$
        & $18.0{\pm}6.9$ & $0.0{\pm}0.0$ & $82.0{\pm}6.9$
        & -- & -- \\

        CBF--CF\textsuperscript{4}
        & $69.0{\pm}11.5$ & $0.0{\pm}0.0$ & $31.0{\pm}11.5$
        & $48.4{\pm}8.4$ & $0.0{\pm}0.0$ & $51.6{\pm}8.4$
        & $33.8{\pm}12.2$ & $0.0{\pm}0.0$ & $66.2{\pm}12.2$
        & $56.7$ & $16.7$ \\

        bCBF--LSE--QP\textsuperscript{5}
        & $\mathbf{99.4{\pm}1.7}$ & $0.0{\pm}0.0$
        & $\mathbf{0.6{\pm}1.7}$
        & $75.0{\pm}9.7$ & $0.0{\pm}0.0$ & $25.0{\pm}9.7$
        & $\mathbf{67.4{\pm}11.7}$ & $0.0{\pm}0.0$
        & $\mathbf{32.6{\pm}11.7}$
        & -- & -- \\

        SAFTER\textsuperscript{5}
        & $98.8{\pm}2.2$ & $0.0{\pm}0.0$ & $1.2{\pm}2.2$
        & $\mathbf{88.6{\pm}7.3}$ & $0.0{\pm}0.0$
        & $\mathbf{11.4{\pm}7.3}$
        & $61.4{\pm}10.2$ & $0.0{\pm}0.0$ & $38.6{\pm}10.2$
        & $\mathbf{73.3}$ & $\mathbf{76.7}$ \\

        \bottomrule
    \end{tabular}
    \end{adjustbox}

    \par\smallskip
    {\scriptsize
    \textit{Refs.:}
    \textsuperscript{1}~\citep{chen2026sea};
    \textsuperscript{2}~\citep{yang2025cbf};
    \textsuperscript{3}~\citep{haarnoja2018soft};
    \textsuperscript{4}~\citep{rabiee2024closed};
    \textsuperscript{5}~\textbf{Ours}.
    }
\end{table}

\textbf{RQ3: Why Does SAFTER Learn Differently?}
Accurate differentiation through the analytical projection has a substantially larger and more consistent effect on robustness than discouraging filter intervention.
As shown in Table~\ref{tab:bcbf_gradient_reward_ablation_full}, replacing the Exact projection Jacobian with the Identity approximation reduces success by up to $85.4$ percentage points across the evaluated distribution shifts.
In contrast, removing the correction reward has a mixed effect: the Exact-Jacobian policy improves from $88.6\%$ to $94.4\%$ on Medium-Shift and from $61.4\%$ to $76.2\%$ on High-Shift, while decreasing from $98.8\%$ to $95.6\%$ on Low-Shift.
Because the forward safety-filter map is unchanged within each Exact/Identity pair, these results provide evidence that propagating the analytical projection Jacobian materially improves robustness, whereas minimizing filter intervention is not uniformly beneficial under larger geometric shifts.

\begin{table}[h]
\centering
\caption{
Ablation of the projection Jacobian and correction reward on unseen environments.
Exact and Identity use the same forward safety-filter map and differ only in the backward differentiation through the analytical projection.
Values are mean $\pm$ standard deviation across 25 evaluation-seed rates.
}
\label{tab:bcbf_gradient_reward_ablation_full}

\setlength{\tabcolsep}{2.4pt}
\renewcommand{\arraystretch}{1.08}

\resizebox{\linewidth}{!}{
\begin{tabular}{cc ccc ccc ccc}
\toprule

\multicolumn{2}{c}{\textbf{Ablation Setting}}
& \multicolumn{3}{c}{\textbf{Low-Shift}}
& \multicolumn{3}{c}{\textbf{Medium-Shift}}
& \multicolumn{3}{c}{\textbf{High-Shift}}
\\

\cmidrule(lr){1-2}
\cmidrule(lr){3-5}
\cmidrule(lr){6-8}
\cmidrule(lr){9-11}

\shortstack{\textbf{Projection}\\\textbf{Jacobian}}
& \shortstack{\textbf{Correction}\\\textbf{Reward}}
& \textbf{SR}$\uparrow$ & \textbf{TR}$\downarrow$ & \textbf{FR}$\downarrow$
& \textbf{SR}$\uparrow$ & \textbf{TR}$\downarrow$ & \textbf{FR}$\downarrow$
& \textbf{SR}$\uparrow$ & \textbf{TR}$\downarrow$ & \textbf{FR}$\downarrow$
\\

\midrule

Exact
& On
& $\mathbf{98.8{\pm}2.2}$ & $0.0{\pm}0.0$ & $1.2{\pm}2.2$
& $88.6{\pm}7.3$ & $0.0{\pm}0.0$ & $11.4{\pm}7.3$
& $61.4{\pm}10.2$ & $0.0{\pm}0.0$ & $38.6{\pm}10.2$
\\

Identity
& On
& $89.8{\pm}6.0$ & $0.0{\pm}0.0$ & $10.2{\pm}6.0$
& $41.6{\pm}9.3$ & $0.0{\pm}0.0$ & $58.4{\pm}9.3$
& $6.2{\pm}4.4$ & $0.0{\pm}0.0$ & $93.8{\pm}4.4$
\\

Exact
& Off
& $95.6{\pm}4.9$ & $4.2{\pm}4.9$ & $\mathbf{0.2{\pm}1.0}$
& $\mathbf{94.4{\pm}4.4}$ & $3.8{\pm}3.9$ & $\mathbf{1.8{\pm}3.5}$
& $\mathbf{76.2{\pm}9.2}$ & $3.2{\pm}3.8$ & $\mathbf{20.6{\pm}8.3}$
\\

Identity
& Off
& $76.8{\pm}7.1$ & $0.0{\pm}0.0$ & $23.2{\pm}7.1$
& $9.0{\pm}6.1$ & $0.0{\pm}0.0$ & $91.0{\pm}6.1$
& $2.0{\pm}3.5$ & $0.0{\pm}0.0$ & $98.0{\pm}3.5$
\\

\bottomrule
\end{tabular}
}
\vspace{-3mm}
\end{table}

\section{Conclusion}
We introduced SAFTER, a recoverability-aware safe-RL framework that combines command-state augmentation, hierarchical LSE composition, and an exact minimum-Euclidean projection into a differentiable training map.
Experiments show improved transition-based sample efficiency and strong performance under geometric distribution shift in both simulated and real-world environments, while the Jacobian ablation highlights the importance of differentiating through the analytical safety-filter map.
Future work will consider uneven terrain and dynamic obstacles.

\subsection*{Acknowledgements}
This work was supported by the NVIDIA Academic Grant Program.



\bibliography{iclr2027_conference}
\bibliographystyle{iclr2027_conference}

\clearpage
\appendix

\section{Experimental Protocols and Additional Results}
\label{app:experiments}

\subsection{Training and Evaluation Maps}
\label{app:train_eval_maps}
The simulated quadruped experiments use a fixed $15~\si{\meter}\times15~\si{\meter}$ workspace.
The training bank comprises five hand-crafted maps with cylindrical obstacles of radius $0.6~\si{\meter}$ and 200 predefined initial--goal scenarios per map (Figure~\ref{fig:training_maps}), varying obstacle arrangements and start--goal orientations.
The policy observes local planar LiDAR with a maximum range of $5~\si{\meter}$, requiring navigation decisions as previously unseen geometry enters the sensing region.
The three unseen environments in Figure~\ref{fig:unseen_maps} introduce progressively larger geometric shifts:
\textit{Low-Shift} preserves the training maps' row-based structure while increasing obstacle count and reversing the initial and goal regions;
\textit{Medium-Shift} retains a similar obstacle count but narrows the passages between rows;
and \textit{High-Shift} introduces different obstacle radii, a square obstacle, and tighter inter-obstacle clearances than Medium-Shift.
All evaluation maps retain the protocol's minimum wall clearance for admissible initial and goal regions.

\begin{figure*}[h]
\centering

\begin{subfigure}[t]{0.31\textwidth}
    \centering
    \includegraphics[width=\linewidth]{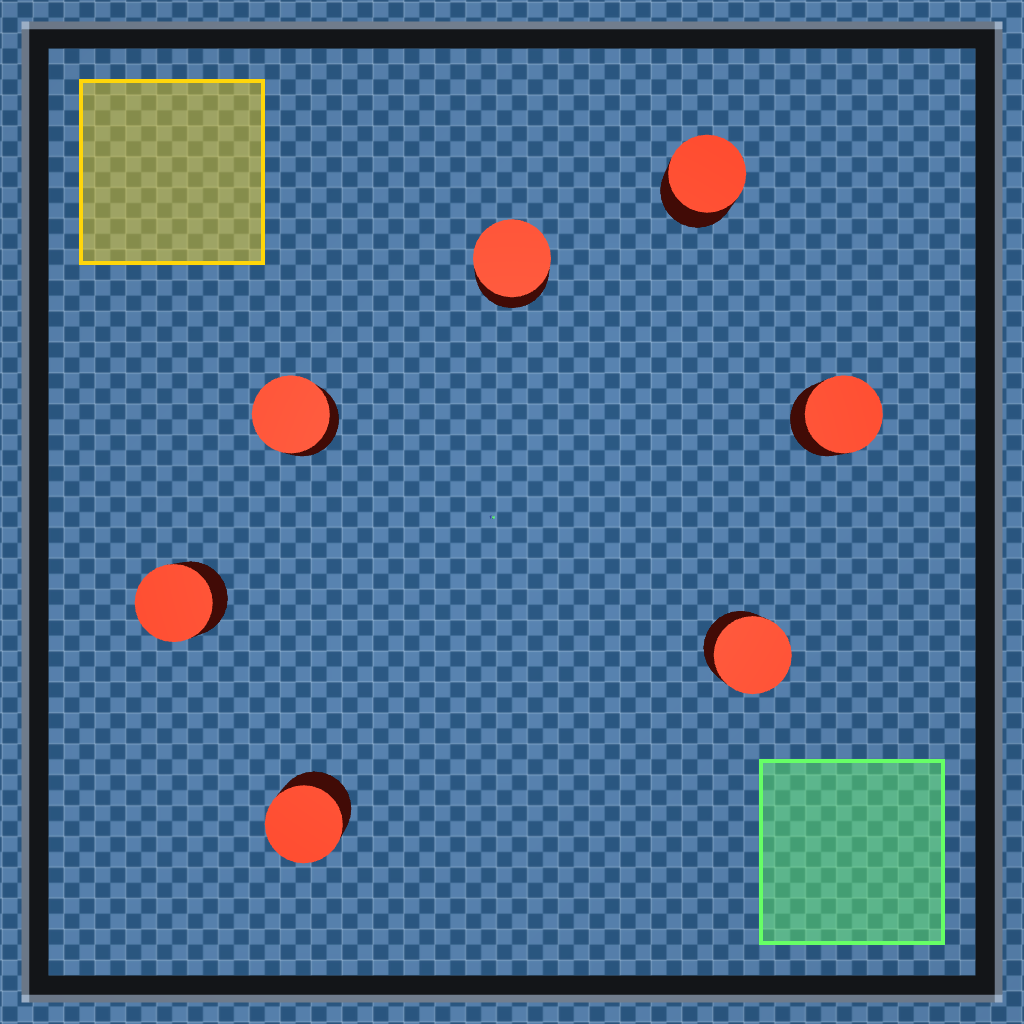}
    \caption{Training Map 1}
    \label{fig:train_map_1}
\end{subfigure}
\hfill
\begin{subfigure}[t]{0.31\textwidth}
    \centering
    \includegraphics[width=\linewidth]{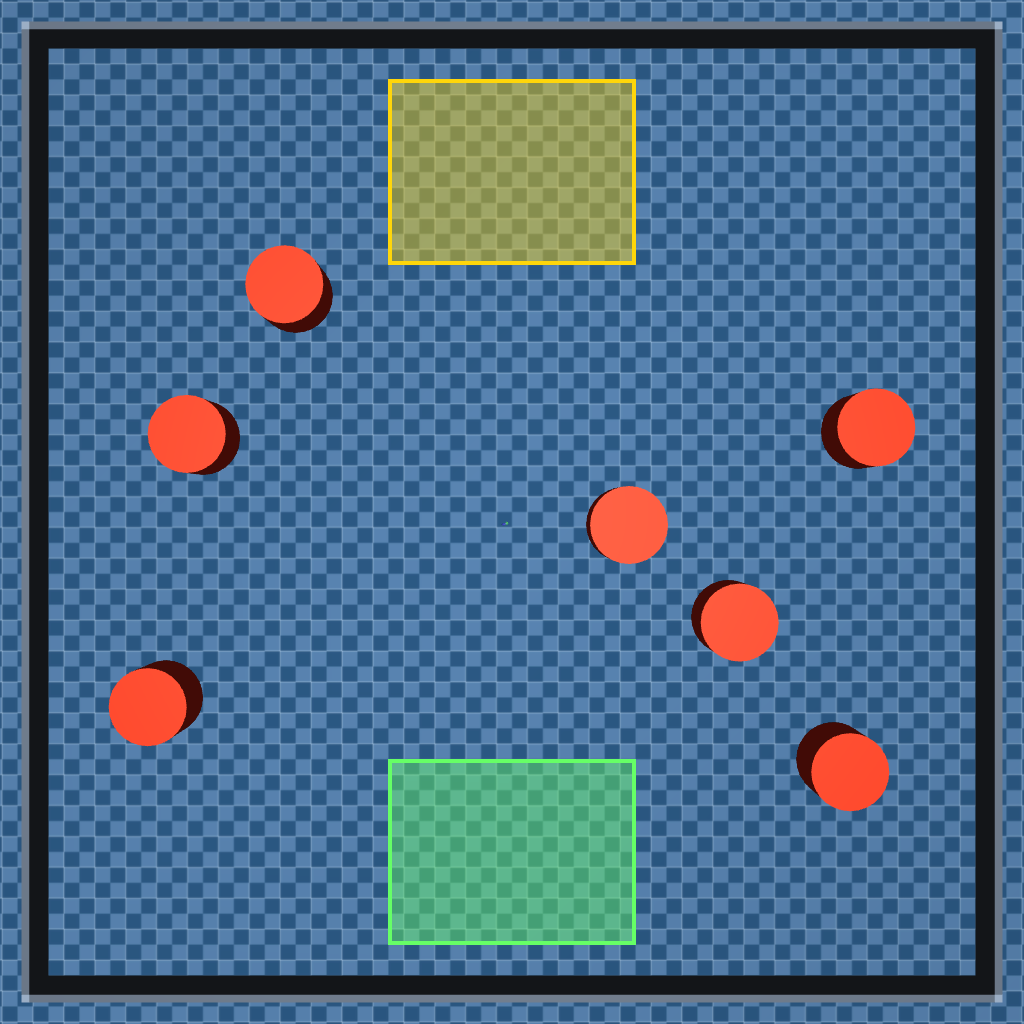}
    \caption{Training Map 2}
    \label{fig:train_map_2}
\end{subfigure}
\hfill
\begin{subfigure}[t]{0.31\textwidth}
    \centering
    \includegraphics[width=\linewidth]{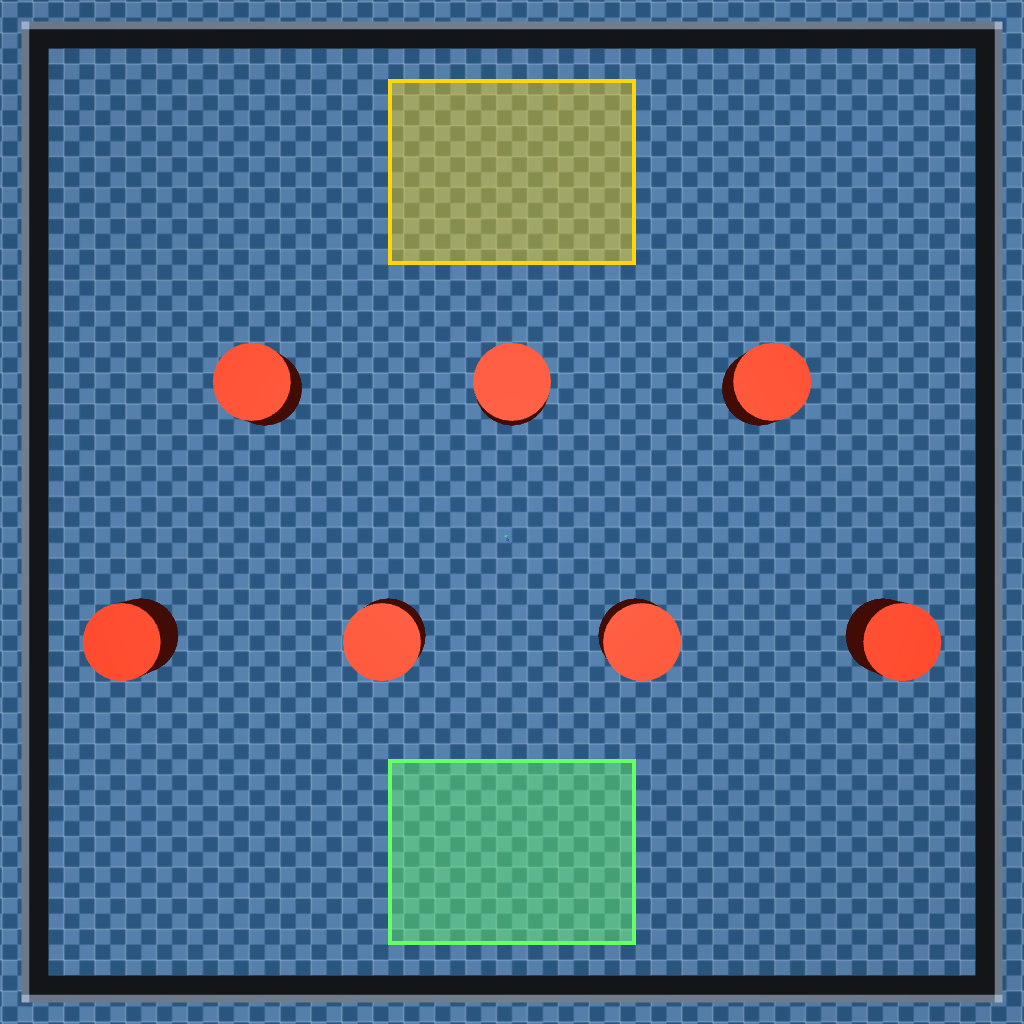}
    \caption{Training Map 3}
    \label{fig:train_map_3}
\end{subfigure}

\vspace{0.5em}

\begin{subfigure}[t]{0.31\textwidth}
    \centering
    \includegraphics[width=\linewidth]{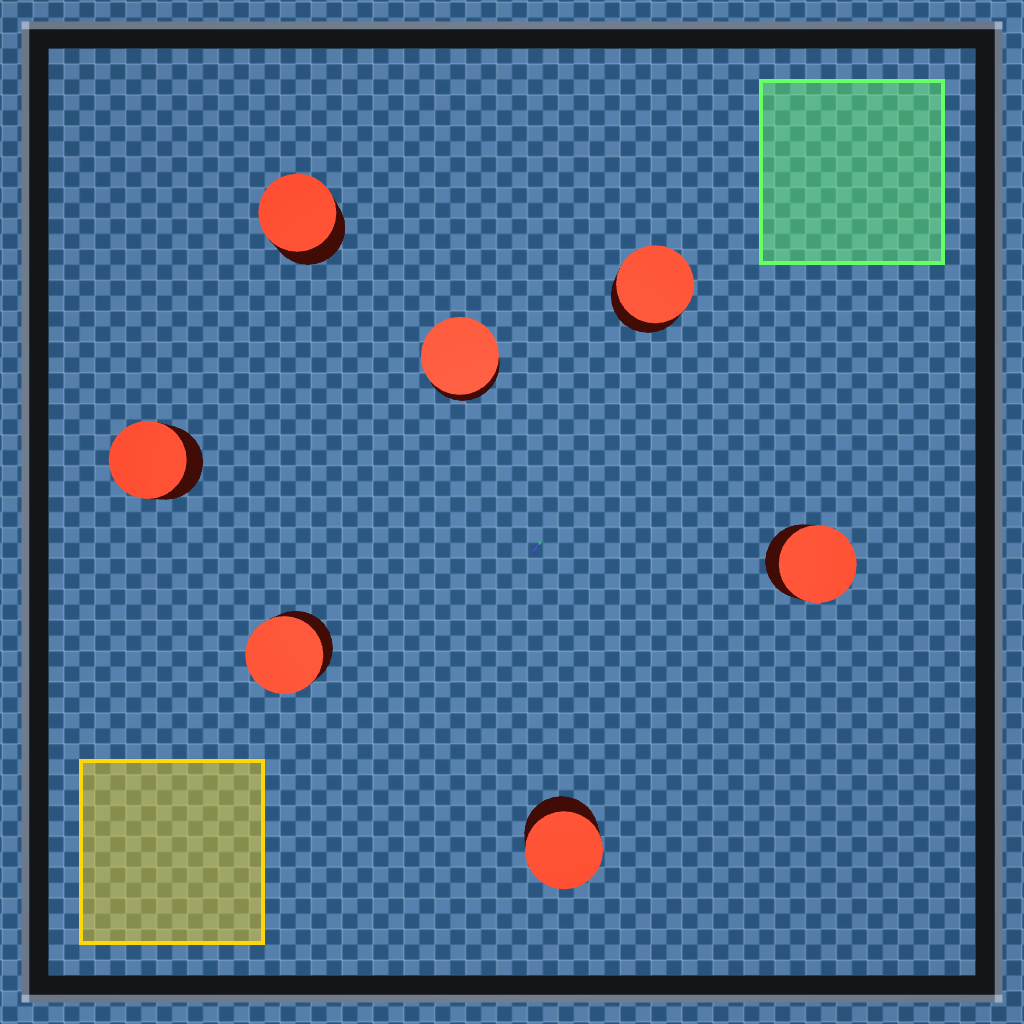}
    \caption{Training Map 4}
    \label{Figures/Experiments/Maps/training_4.png}
\end{subfigure}
\hspace{0.04\textwidth}
\begin{subfigure}[t]{0.31\textwidth}
    \centering
    \includegraphics[width=\linewidth]{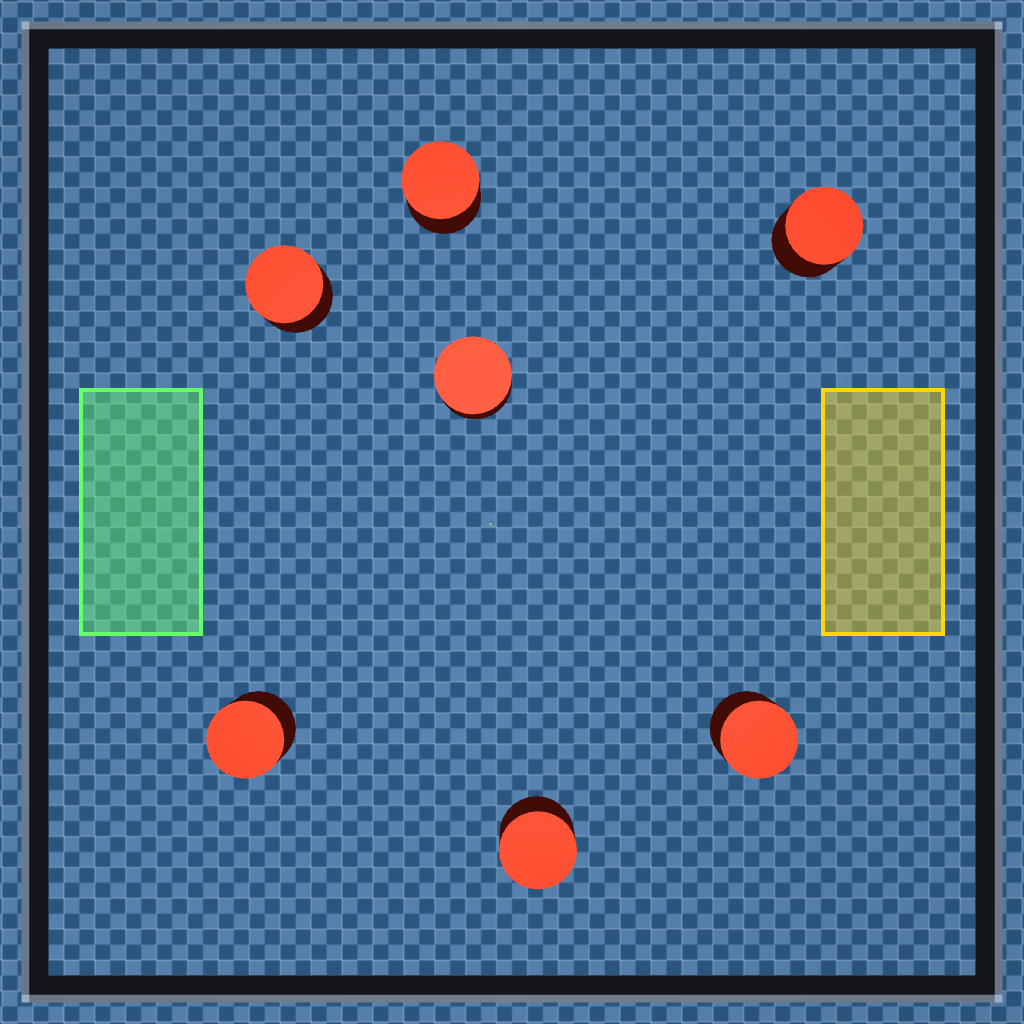}
    \caption{Training Map 5}
    \label{fig:train_map_5}
\end{subfigure}

\caption{Five hand-crafted training maps with fixed obstacle configurations and predefined initial and goal regions. Green and yellow regions indicate the initial and goal regions, respectively, while obstacles are shown in red.}
\label{fig:training_maps}
\end{figure*}

\begin{figure*}[h]
\centering
\begin{subfigure}[t]{0.32\textwidth}
    \centering
    \includegraphics[width=\linewidth]{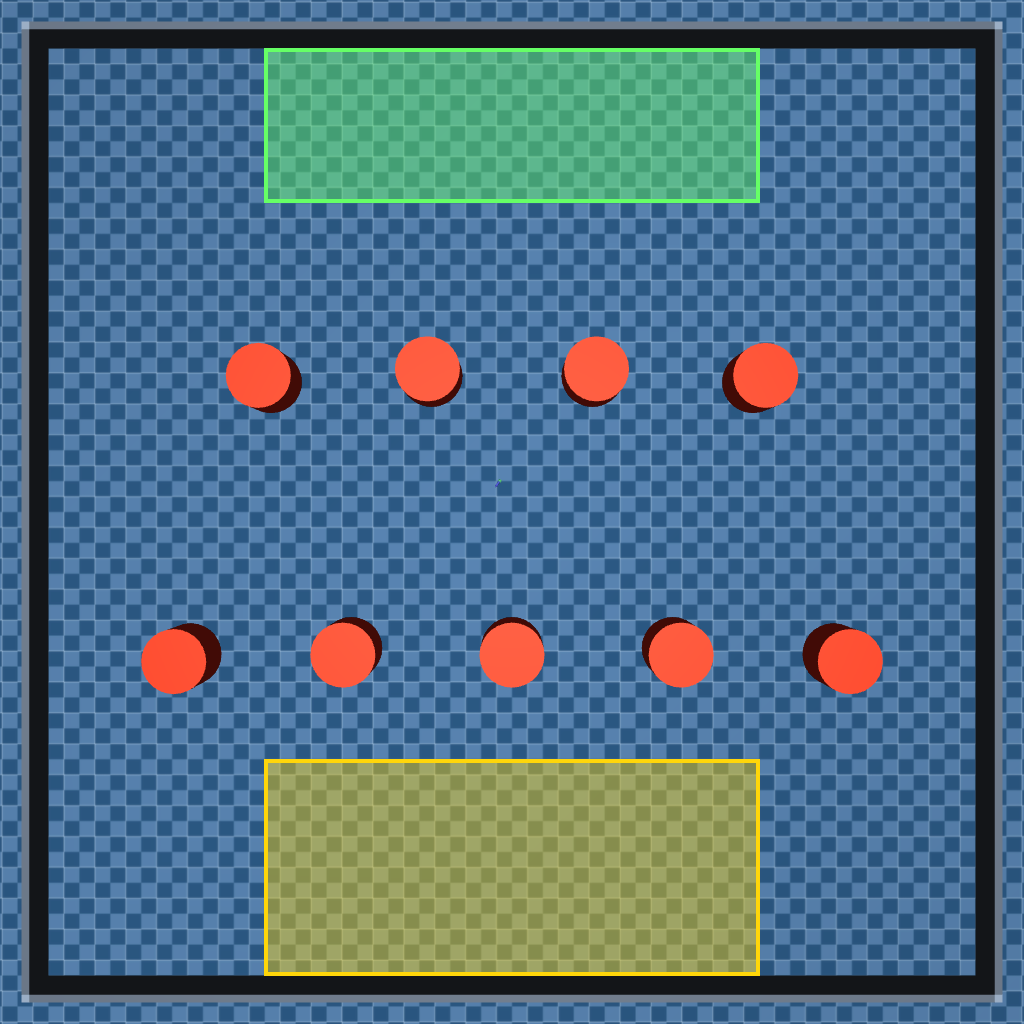}
    \caption{Low-Shift}
    \label{fig:map_low}
\end{subfigure}
\hfill
\begin{subfigure}[t]{0.32\textwidth}
    \centering
    \includegraphics[width=\linewidth]{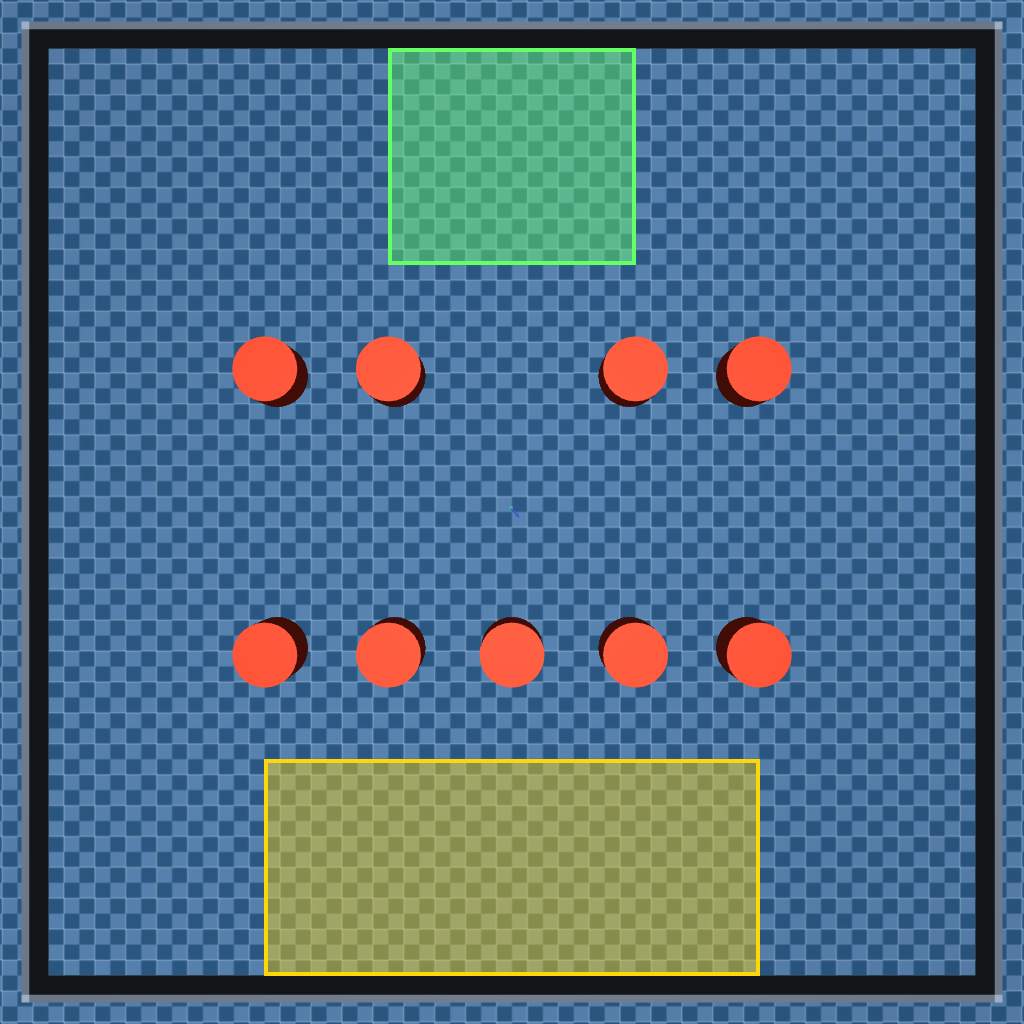}
    \caption{Medium-Shift}
    \label{fig:map_medium}
\end{subfigure}
\hfill
\begin{subfigure}[t]{0.32\textwidth}
    \centering
    \includegraphics[width=\linewidth]{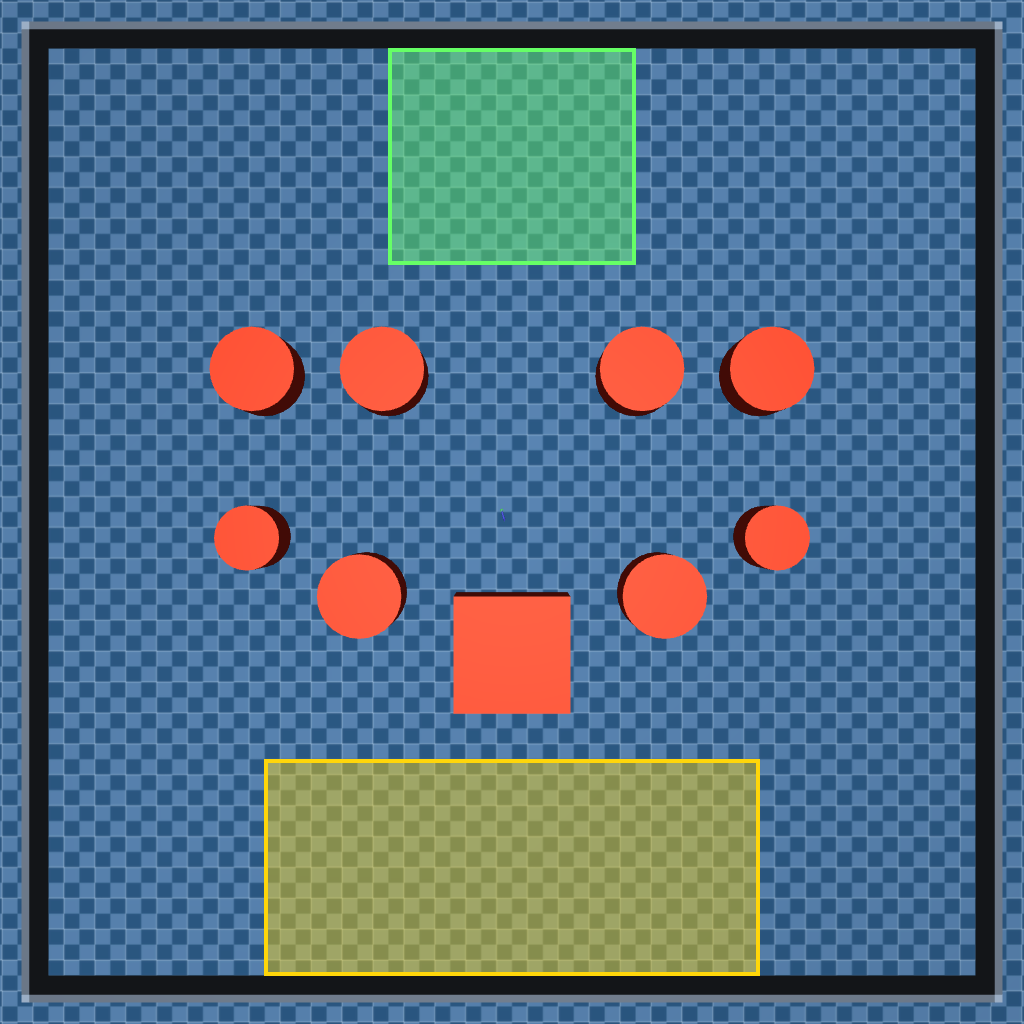}
    \caption{High-Shift}
    \label{fig:map_high}
\end{subfigure}

\caption{Static unseen environments used for distribution-shift evaluation. Green and yellow regions indicate the admissible initial and goal regions, respectively, while obstacles are shown in red. The environments introduce increasingly different obstacle configurations relative to the hand-crafted training maps.}
\label{fig:unseen_maps}
\end{figure*}

\subsection{Path Length and Navigation Time of Filter-Enabled Evaluations on Unseen Environments}
\label{app:pl_nt_unseen_active}
Table~\ref{tab:path_time_filter_active_unseen} reports path length and navigation time only over successful episodes for the filter-enabled unseen-environment evaluations.
Path lengths are broadly comparable across the SAC/CBF methods, while SAFTER generally requires more navigation time than the current-state baselines and less than bCBF--LSE--QP or CBF-RL.
Thus the success-rate gains under geometric shift should not be interpreted as uniformly shorter or faster successful trajectories.
Because PL and NT are conditioned on success, they should be interpreted together with the success rates in Table~\ref{tab:eval_filter_active}.

\begin{table}[h]
\centering
\caption{%
Navigation efficiency on the unseen static environments with the runtime safety filters enabled.
Path Length (PL) and Navigation Time (NT) are computed over successful episodes and reported as mean $\pm$ standard deviation. Dashes indicate that no successful episodes were available for computing PL and NT.
}
\label{tab:path_time_filter_active_unseen}

\resizebox{\columnwidth}{!}{%
\begin{tabular}{lcccccc}
\toprule
&
\multicolumn{2}{c}{\textbf{Low-Shift}}
&
\multicolumn{2}{c}{\textbf{Medium-Shift}}
&
\multicolumn{2}{c}{\textbf{High-Shift}}
\\
\cmidrule(lr){2-3}
\cmidrule(lr){4-5}
\cmidrule(lr){6-7}

Method
& \textbf{PL (m)} $\downarrow$ & \textbf{NT (s)} $\downarrow$
& \textbf{PL (m)} $\downarrow$ & \textbf{NT (s)} $\downarrow$
& \textbf{PL (m)} $\downarrow$ & \textbf{NT (s)} $\downarrow$
\\
\midrule

SEA--Nav\textsuperscript{1}
& $11.57{\pm}0.17$ & $13.70{\pm}0.17$
& $10.95{\pm}0.30$ & $14.21{\pm}0.27$
& $--$ & $--$ \\

CBF--RL\textsuperscript{2}
& $14.86{\pm}0.78$ & $32.02{\pm}4.02$
& $14.93{\pm}0.76$ & $34.93{\pm}4.3$
& $15.72{\pm}1.19$ & $38.99{\pm}8.45$ \\

\midrule

Vanilla SAC\textsuperscript{3}
& $12.76{\pm}0.41$ & $26.75{\pm}1.54$
& $12.38{\pm}0.68$ & $26.31{\pm}2.13$
& $12.20{\pm}0.70$ & $25.97{\pm}1.77$ \\

CBF--QP\textsuperscript{4}
& $12.05{\pm}0.38$ & $25.08{\pm}1.34$
& $11.50{\pm}0.45$ & $23.77{\pm}1.55$
& $11.82{\pm}1.11$ & $25.42{\pm}4.02$ \\

CBF--CF\textsuperscript{4}
& $12.26{\pm}0.36$ & $26.85{\pm}1.06$
& $11.72{\pm}0.42$ & $24.84{\pm}1.22$
& $11.72{\pm}0.35$ & $25.93{\pm}2.06$ \\

bCBF--LSE--QP\textsuperscript{5}
& $13.06{\pm}0.37$ & $32.74{\pm}1.11$
& $12.20{\pm}0.45$ & $29.50{\pm}1.67$
& $12.33{\pm}0.54$ & $29.91{\pm}2.07$ \\

SAFTER\textsuperscript{5}
& $12.73{\pm}0.32$ & $30.66{\pm}1.18$
& $11.87{\pm}0.38$ & $28.10{\pm}1.40$
& $12.03{\pm}0.32$ & $29.43{\pm}1.34$ \\

\bottomrule
\end{tabular}%
}
{\scriptsize
    \textit{Refs.:}
    \textsuperscript{1}~\citep{chen2026sea};
    \textsuperscript{2}~\citep{yang2025cbf};
    \textsuperscript{3}~\citep{haarnoja2018soft};
    \textsuperscript{4}~\citep{rabiee2024closed};
    \textsuperscript{5}~\textbf{Ours}.
    }
\vspace{-6mm}
\end{table}

\section{Theoretical Foundations and Proofs}
\label{app:theory}

\subsection{Direct Sampled bCBF Projection}
\label{app:sampled_bcbf_qp}
The bCBF--QP runtime reference is defined in the same augmented surrogate-control coordinates as SAFTER.
In particular, it uses the augmented state $z=[x^\top,u^\top]^\top$, surrogate input $\hat u$, prescribed surrogate backup $\hat u_{\mathrm b}$, backup flow $\varphi_{\mathrm b}^{z}$, lifted current-time margins $\psi_i$, positive-time backup margins $h_i^{\mathrm b}(z,\tau_k)$, terminal margins $h_{B,j}^{\mathrm b}(z,T)$, and physical input-admissibility margins $I_d^\pm(z)$ introduced in Section~\ref{sec:proposed_method}.
Instead of composing these quantities with the hierarchical LSE certificate,
the direct baseline enforces their CBF conditions separately:
\begin{equation*}
\begin{aligned}
    \hat u_t^{\mathrm{bCBF\text{-}QP}}
    ={}&\arg\min_{\hat{u}\in\mathbb{R}^m}
        \frac12\|\hat u-\hat u_t^{\mathrm{nom}}\|_2^2\\
    \mathrm{s.t.}\quad&
        L_{\hat f}\psi_i(z_t)
        +L_{\hat g}\psi_i(z_t)\hat u
        +\alpha_0(\psi_i(z_t))\geq0,
        &&\forall i,\\
    &L_{\hat f}h_i^{\mathrm b}(z_t,\tau_k)
      +L_{\hat g}h_i^{\mathrm b}(z_t,\tau_k)\hat u
      +\alpha_{\mathrm b}(h_i^{\mathrm b}(z_t,\tau_k))\geq0,
        &&\forall i,k,\\
    &L_{\hat f}h_{B,j}^{\mathrm b}(z_t,T)
      +L_{\hat g}h_{B,j}^{\mathrm b}(z_t,T)\hat u
      +\alpha_B(h_{B,j}^{\mathrm b}(z_t,T))\geq0,
        &&\forall j, \\
    & L_{\hat{f}}I_d^+(z_t) + L_{\hat{g}}I_d^+(z_t)\hat{u}  + \alpha_I\!\left(I_d^+(z_t)\right) \ge 0, \quad d=1,\ldots,m, \\
     & L_{\hat{f}}I_d^-(z_t) + L_{\hat{g}}I_d^-(z_t)\hat{u} + \alpha_I\!\left(I_d^-(z_t)\right) \ge 0, \quad d=1,\ldots,m.
\end{aligned}
\end{equation*}
Here, $\alpha_0$, $\alpha_{\mathrm b}$, $\alpha_B$, and $\alpha_I$ denote the class-$\mathcal{K}$ functions associated with the respective CBF conditions.

The Lie derivatives of the future and terminal terms include the sensitivity
of $\varphi_{\mathrm b}^{z}$ to its initial augmented state, as derived in
Appendix~\ref{app:backup_sensitivity}.
At finite policy rate, the experimental bCBF--QP baseline uses the same command-state dynamics, command limits, backup horizon, and terminal set as SAFTER.
Its synthesis difference is that the sampled safety conditions remain separate QP constraints, whereas SAFTER first composes them into a single affine condition.
Consequently, bCBF--QP and SAFTER use the same command-state augmentation, backup maneuver, horizon samples, terminal set, and physical command limits.
Their control-synthesis difference is that bCBF--QP retains one affine constraint per constituent, whereas SAFTER first composes the constituents and projects onto a single composite half-space.
Sampling the backup horizon does not, by itself, certify safety between adjacent $\tau_k$.
The finite-rate and inter-sample limitations are discussed in Appendix~\ref{app:guarantee_scope}.

\subsection{LSE Properties}
\label{app:lse_properties}
For the set-based smooth minimum in \eqref{eq:lse_main}, enumerate
$\mathcal S=\{c_\ell\}_{\ell=1}^{M}$. Its gradient is
\begin{equation}
    \nabla\softmin_{\kappa}(\mathcal S)
    =\sum_{\ell=1}^{M}w_\ell\nabla c_\ell,
    \qquad
    w_\ell
    =\frac{e^{-\kappa c_\ell}}
           {\sum_{r=1}^{M}e^{-\kappa c_r}},
    \label{eq:app_lse_gradient}
\end{equation}
so $w_\ell>0$ and $\sum_\ell w_\ell=1$. Hence every constituent contributes
to the gradient of the single composite certificate.

\begin{proof}[Proof of the LSE bound in \eqref{eq:lse_main}]
Let $c_{\min}=\min_\ell c_\ell$. Since
\[
    e^{-\kappa c_{\min}}
    \leq\sum_{\ell=1}^{M}e^{-\kappa c_\ell}
    \leq M e^{-\kappa c_{\min}},
\]
taking $-\kappa^{-1}\log(\cdot)$ gives the stated bound. Differentiating the
same expression gives \eqref{eq:app_lse_gradient}.
\end{proof}

\subsection{Relative Degree and Backup-Flow Sensitivity}
\label{app:hocbf_relative_degree}
Because $h_i$ depends on $x$ but not on the command state $u$,
$L_{\hat g}h_i=0$ under \eqref{eq:augmented_dynamics}. For
\[
    \psi_i(z)=L_fh_i(x)+L_gh_i(x)u+k_0h_i(x),
\]
we have
\[
    \frac{\partial\psi_i}{\partial u}=L_gh_i(x),
    \qquad
    L_{\hat g}\psi_i=L_gh_i(x)A.
\]
Thus $\psi_i$ has relative degree one with respect to $\hat u$ wherever
$L_gh_i(x)A\neq0$. Moreover, $\psi_i=\dot h_i+k_0h_i\geq0$ together with
$h_i(x(0))\geq0$ implies $h_i(x(t))\geq0$ for the continuous augmented model.

\subsubsection{Backup-flow sensitivity}
\label{app:backup_sensitivity}
Let
\[
    \hat f_{\mathrm b}(z):=\hat f(z)+\hat g(z)\hat u_{\mathrm b}(z),
    \qquad
    z_{\mathrm b}(\tau)=\varphi_{\mathrm b}^{z}(\tau;z_0).
\]
The sensitivity
\[
    J_{\mathrm b}(\tau,z_0)
    :=\frac{\partial\varphi_{\mathrm b}^{z}(\tau;z_0)}{\partial z_0}
\]
satisfies
\[
    \dot J_{\mathrm b}
    =\frac{\partial\hat f_{\mathrm b}}{\partial z}
      (z_{\mathrm b}(\tau))J_{\mathrm b},
    \qquad
    J_{\mathrm b}(0,z_0)=I_{n+m}.
\]
For $E_x=[I_n\;0_{n\times m}]$,
\[
    \nabla_{z_0}h_i^{\mathrm b}(z_0,\tau)
    =J_{\mathrm b}(\tau,z_0)^\top E_x^\top
      \nabla_xh_i(E_xz_{\mathrm b}(\tau)).
\]
The terminal gradients follow by the same chain rule. These sensitivities are
what allow the future and terminal margins to contribute to
$L_{\hat g}h_{\mathrm{aug}}$.

\subsection{Closed-Form Projection: KKT Derivation}
\label{app:kkt_derivation}
Consider \eqref{eq:closed_form_qp} and define only for this derivation
\[
    \Omega_t:=\mathbf a(z_t)\hat u_t^{\mathrm{nom}}+b(z_t).
\]
With multiplier $\mu\geq0$, the Lagrangian is
\[
    \mathcal L(\hat u,\mu)
    =\frac12\|\hat u-\hat u_t^{\mathrm{nom}}\|_2^2
     -\mu\bigl(\mathbf a(z_t)\hat u+b(z_t)\bigr).
\]
Stationarity gives
$\hat u=\hat u_t^{\mathrm{nom}}+\mu\mathbf a(z_t)^\top$.
If $\|\mathbf a(z_t)\|_2>0$, complementary slackness yields
\[
    \mu^\star
    =\frac{\max\{0,-\Omega_t\}}{\|\mathbf a(z_t)\|_2^2},
\]
which recovers \eqref{eq:closed_form_solution}. If
$\mathbf a(z_t)=0$, the surrogate input has no local authority over the
certificate: $b(z_t)\geq0$ makes every $\hat u$ feasible, whereas
$b(z_t)<0$ is locally infeasible.

\subsection{Continuous Safety Assumptions and Proof}
\label{app:safety_proof}
\begin{assumption}[Continuous composite-CBF conditions]
\label{ass:continuous_safety_conditions}
On the domain where Theorem~\ref{thm:continuous_composite_safety} is invoked:
(i) the augmented dynamics, backup-flow terms, terminal functions, and
$h_{\mathrm{aug}}$ are continuously differentiable;
(ii) the augmented model represents the continuous high-level safety model;
(iii) the terminal backup set $\mathcal C_B^z$ is forward invariant under the
prescribed surrogate backup $\hat u_{\mathrm b}$ and satisfies
$E_x(\mathcal C_B^z)\subseteq\mathcal C_S$;
(iv) $h_i(x(0))\geq0$ for every primitive barrier and
$h_{\mathrm{aug}}(z(0))\geq0$; and
(v) the composite affine CBF condition is feasible whenever it is enforced.
\end{assumption}

\begin{proof}[Proof of Theorem~\ref{thm:continuous_composite_safety}]
The projected surrogate satisfies \eqref{eq:composite_affine_condition},
which is the standard CBF condition for $h_{\mathrm{aug}}$ under the augmented
dynamics. Continuous enforcement therefore makes
$\mathcal C_{\mathrm{aug}}=\{z:h_{\mathrm{aug}}(z)\geq0\}$ forward invariant.

Because the LSE smooth minimum under-approximates the ordinary minimum,
$h_{\mathrm{aug}}\geq0$ implies nonnegativity of every normalized constituent,
and each fixed positive normalization preserves sign. In particular,
$\psi_i=\dot h_i+k_0h_i\geq0$ and $h_i(x(0))\geq0$ imply
\[
    h_i(x(t))\geq h_i(x(0))e^{-k_0t}\geq0,
\]
so the reduced safety-model state remains in $\mathcal C_S$. The
positive-time backup margins are nonnegative at every selected $\tau_k$, and
the terminal margins place the predicted backup state in $\mathcal C_B^z$ at
$T$. By Assumption~\ref{ass:continuous_safety_conditions}(iii), continuing the
prescribed backup from that terminal state remains in $\mathcal C_B^z$ and
therefore in the original safe set after projection through $E_x$. Finally,
the input-admissibility constituents remain nonnegative, so the continuous
command state remains within its prescribed component-wise bounds. These are
the safety and sampled-horizon recoverability consequences stated in the
theorem. Inter-sample and finite-rate limitations are separated explicitly in
Appendix~\ref{app:guarantee_scope}.
\end{proof}

\section{Quadruped Navigation Specialization}
\label{app:quadruped_safety}

\subsection{Policy Observation, Action, and Safety-Filter Inputs}
\label{app:pomdp_details}

The high-level actor observation is
\begin{equation*}
    o_t=
    \begin{bmatrix}
        (\bar p_{G,t}^{B})^\top &
        \nu_t^\top &
        (u_{t-1}^{\mathrm{safe}})^\top &
        \bar\rho_t^\top &
        (u_{t-1}^{\mathrm{nom}})^\top &
        \iota_{t-1}
    \end{bmatrix}^\top
    \in\mathbb R^{N_h+12}.
\end{equation*}
Here, $p_{G,t}^{B}\in\mathbb R^2$ is the body-frame goal displacement and $\bar p_{G,t}^{B}=p_{G,t}^{B}/d_{\mathrm{norm}}$ is its normalized form with $d_{\mathrm{norm}}=7.5\,\mathrm m$ in the implementation.
The vector $\nu_t=[v_{x,t}^{\mathrm{meas}},v_{y,t}^{\mathrm{meas}},\omega_{z,t}^{\mathrm{meas}}]^\top\in\mathbb R^3$ contains measured body-frame translational and yaw velocities for the actor observation; it is not the augmented control-input state used by the safety-filter model.
The quantities $u_{t-1}^{\mathrm{safe}},u_{t-1}^{\mathrm{nom}}\in\mathbb R^3$ are the previous executed and nominal physical control inputs, respectively, and $\bar\rho_t\in[0,1]^{N_h}$ is the LiDAR scan normalized by the configured maximum range.
The flag $\iota_{t-1}\in\{0,1\}$ indicates whether the previous nominal control input is valid for the finite-difference derivative in Appendix~\ref{app:nominal_tracking}.
For the $360^\circ$ configuration, $N_h=120$ and therefore $o_t\in\mathbb R^{132}$.
The high-level actor does not observe the full quadruped proprioceptive state because full-body stabilization and low-level tracking are handled by the frozen locomotion policy.

The normalized actor output $\tilde u_t^{\mathrm{nom}}\in[-1,1]^3$
is mapped to the physical nominal command according to
\begin{equation*}
    u_t^{\mathrm{nom}}
    =
    c_u+S_u\tilde u_t^{\mathrm{nom}},
    \qquad
    c_u
    =
    \frac{u_{\max}+u_{\min}}{2},
    \qquad
    S_u
    =
    \operatorname{diag}\!\left(
        \frac{u_{\max}-u_{\min}}{2}
    \right).
\end{equation*}
For the experiments, we used $ u_{\min}=[-1,-1,-1]^\top$ and $u_{\max}=[2,1,1]^\top$.
At policy step $t$, define $u_t^-:=u_{t-1}^{\mathrm{safe}}$ as the applied command state entering the first-order augmentation. Thus the augmented filter state uses $z_t=[x_t^\top,(u_t^-)^\top]^\top$, while the measured velocity $\nu_t$ is used only by the actor observation.
The policy observation $o_t$ and safety-filter input $\xi_t$ therefore contain related but nonidentical information. In particular, the filter does not use the goal displacement or measured body velocity when evaluating the analytical safety projection; it uses the current LiDAR scan and the command history required by the local augmented model.

\subsection{LiDAR Safety Construction}
\label{app:lidar_model}

\subsubsection{Scan-Fixed Local Geometry}
\label{app:scan_fixed_geometry}
At each high-level safety-filter update, the generic SAFTER plant state $x$ is specialized to the reduced planar navigation state
\begin{equation*}
    x=
    \begin{bmatrix}
        p_x & p_y & \theta
    \end{bmatrix}^{\top},
\end{equation*}
where $p=[p_x,p_y]^\top$ is planar position and $\theta$ is yaw. The applied high-level command is $u=[v_x,v_y,\omega_z]^\top$, giving
\begin{equation*}
\dot p=R(\theta)\begin{bmatrix}v_x\\v_y\end{bmatrix},
\qquad
\dot\theta=\omega_z,
\qquad
\dot u=-Au+A\hat u,
\end{equation*}
with
\[
R(\theta)=\begin{bmatrix}
\cos\theta & -\sin\theta\\
\sin\theta & \cos\theta
\end{bmatrix},
\qquad
A=\operatorname{diag}(a_x,a_y,a_\omega)\succ0.
\]
Thus the generic augmented state $z=[x^\top,u^\top]^\top$ becomes $z=[p_x,p_y,\theta,v_x,v_y,\omega_z]^\top$ for the navigation certificate.

A temporary scan-fixed frame is initialized at every filter update with
$p(0)=0$ and $\theta(0)=0$. Each valid LiDAR return is converted into an
obstacle-boundary point
\begin{equation*}
q_i =
p_L^B +
\rho_i
\begin{bmatrix}
\cos\beta_i\\
\sin\beta_i
\end{bmatrix},
\end{equation*}
where $\rho_i$ and $\beta_i$ are the measured range and ray angle, and
$p_L^B$ is the LiDAR offset from the robot-body origin. The points $q_i$
are held fixed during one safety-filter computation.
As the robot evolves relative to this fixed scan, $q_i$ remains fixed while its body-frame coordinate is the derived quantity
\begin{equation}
\zeta_i(x)
=
R(\theta)^\top(q_i-p).
\label{eq:moving_lidar_point}
\end{equation}

\subsubsection{Elliptical Safety Barrier}

The robot footprint and safety clearance are conservatively represented by a body-aligned ellipse with longitudinal and lateral semi-axes $r_x$ and $r_y$. The semi-axes account for the physical robot dimensions, LiDAR offset, and configured safety clearance; for the verified Go2 configuration, $r_x=0.58945$ m and $r_y=0.39445$ m. Define
\begin{equation*}
    \Pi=\operatorname{diag}(r_x^{-2},r_y^{-2}).
\end{equation*}
For obstacle point $\zeta_i=[\zeta_{i,x},\zeta_{i,y}]^\top$ in the instantaneous body frame, the geometric barrier is
\begin{equation*}
    h_i(p,\theta)
    =
    \zeta_i^\top \Pi \zeta_i-1
    =
    \frac{\zeta_{i,x}^2}{r_x^2}+\frac{\zeta_{i,y}^2}{r_y^2}-1.
\end{equation*}
Thus, $h_i\geq0$ requires the observed obstacle point to lie outside the inflated body-aligned ellipse.
The elliptical footprint permits yaw to influence the barrier whenever $r_x\neq r_y$.
Let
\begin{equation*}
    J
    =
    \begin{bmatrix}
        0 & -1\\
        1 & 0
    \end{bmatrix}.
\end{equation*}
Differentiating \eqref{eq:moving_lidar_point} gives $\dot \zeta_i=-[v_x,v_y]^\top-\omega_zJ\zeta_i$ and hence
\begin{equation*}
    \dot h_i
    =
    -2\zeta_i^\top \Pi
    \begin{bmatrix}
        v_x\\v_y
    \end{bmatrix}
    -
    2\omega_z \zeta_i^\top \Pi J\zeta_i.
\end{equation*}
For the directly applied high-level control input $u=[v_x,v_y,\omega_z]^\top$, the corresponding CBF row is
\begin{equation*}
    L_gh_i
    =
    \begin{bmatrix}
        -2r_x^{-2}\zeta_{i,x}
        &
        -2r_y^{-2}\zeta_{i,y}
        &
        -2(r_y^{-2}-r_x^{-2})\zeta_{i,x}\zeta_{i,y}
    \end{bmatrix}.
\end{equation*}
The yaw coefficient is generally nonzero for an anisotropic ellipse and vanishes in the circular special case $r_x=r_y$.

\subsubsection{Current-Time HOCBF Condition}

Under the augmented control-input dynamics, the primitive barrier $h_i$ has no first-order dependence on the surrogate control input $\hat u$.
The current-time lifted constituent is therefore
\begin{equation*}
    \psi_i(z)
    =
    \dot h_i(z)+k_0h_i(z),
    \qquad
    k_0>0.
\end{equation*}
At the beginning of each scan-fixed computation, $p=0$ and $\theta=0$, so $\zeta_i=q_i$ in the temporary local frame.
To avoid overloading the plant input matrix $g(x)$ or the KKT multiplier $\mu$, define the two point-gradient components
\begin{align*}
    \Gamma_{i,x}^{\psi}
    &:=
    -2r_x^{-2}v_x
    -2\omega_z(r_y^{-2}-r_x^{-2})\zeta_{i,y}
    +2k_0r_x^{-2}\zeta_{i,x},
    \\
    \Gamma_{i,y}^{\psi}
    &:=
    -2r_y^{-2}v_y
    -2\omega_z(r_y^{-2}-r_x^{-2})\zeta_{i,x}
    +2k_0r_y^{-2}\zeta_{i,y}.
\end{align*}
The complete gradient with respect to $z=[p_x,p_y,\theta,v_x,v_y,\omega_z]^\top$ is
\begin{equation*}
\nabla_z\psi_i
=
\begin{bmatrix}
    -\Gamma_{i,x}^{\psi}\\
    -\Gamma_{i,y}^{\psi}\\
    \Gamma_{i,x}^{\psi}\zeta_{i,y}-\Gamma_{i,y}^{\psi}\zeta_{i,x}\\
    -2r_x^{-2}\zeta_{i,x}\\
    -2r_y^{-2}\zeta_{i,y}\\
    -2(r_y^{-2}-r_x^{-2})\zeta_{i,x}\zeta_{i,y}
\end{bmatrix}.
\end{equation*}
Its last three entries give $\partial\psi_i/\partial u=L_gh_i$, so multiplication by the control-state input block $A$ recovers $L_{\hat g}\psi_i=L_gh_iA$ as stated in Section~\ref{sec:augmentation_lifting}.
The first three entries account for the dependence of the scan-fixed point coordinate on robot translation and yaw.
Together, these expressions provide the code-aligned current-time gradient used in the hierarchical certificate.

\subsubsection{Backup Rollout and Positive-Time Barrier}

The quadruped specialization uses the zero surrogate backup $\hat u_{\mathrm b}=0$.
Along this backup flow, the applied control-input state decays as
\begin{equation*}
    u(\tau)
    =
    e^{-A\tau}u_0.
\end{equation*}
The corresponding yaw trajectory and planar position are
\begin{equation*}
    \theta_{\mathrm b}(\tau)
    =
    \theta_0
    +
    \omega_{z,0}
    \frac{1-e^{-a_\omega\tau}}{a_\omega},
\end{equation*}
\begin{equation*}
    p_{\mathrm b}(\tau)
    =
    p_0
    +
    \int_0^\tau
    R(\theta_{\mathrm b}(\sigma))
    \begin{bmatrix}
        v_{x,0}e^{-a_x\sigma}\\
        v_{y,0}e^{-a_y\sigma}
    \end{bmatrix}
    d\sigma.
\end{equation*}
These equations define the exact continuous augmented backup flow $z_{\mathrm b}(\tau)=\varphi_{\mathrm b}^{z}(\tau;z_0)$ used in the theoretical construction. The implementation evaluates the planar integral with fixed eight-point Gauss--Legendre quadrature and differentiates that fixed quadrature map analytically. Thus the implemented rollout is a smooth numerical approximation of the exact flow; it does not use sequential Euler integration or finite-difference flow sensitivities. All requested horizon states and their implementation-level sensitivities are therefore evaluated as smooth functions of the same initial augmented state.

The implementation used a total of 21 rollout nodes, including the current-time node at $\tau_0=0$, along the trajectory generated by the backup controller.
Because the raw zero-time geometric barrier is replaced by the lifted constituent $\psi_i$, the LSE receives $K=20$ positive-time samples,
\begin{equation*}
    \tau_k
    =
    \frac{kT}{20}
    =
    0.2k\,\mathrm s,
    \qquad
    k=1,\ldots,20,
    \qquad
    T=4.0\,\mathrm s.
\end{equation*}
For each positive-time sample, the backup safety value is
\begin{equation*}
    h_i^{\mathrm b}(z_0,\tau_k)
    =
    h_i\!\left(
        p_{\mathrm b}(\tau_k),
        \theta_{\mathrm b}(\tau_k)
    \right).
\end{equation*}
Let
\begin{equation*}
    J_{\mathrm b}(\tau,z_0)
    =
    \frac{\partial\varphi_{\mathrm b}^{z}(\tau;z_0)}{\partial z_0}
\end{equation*}
denote the augmented backup-flow sensitivity.
At a future state, the primitive barrier gradient is
\begin{equation*}
\nabla_z h_i
=
\begin{bmatrix}
    -\left(R(\theta)(2\Pi\zeta_i)\right)_1\\
    -\left(R(\theta)(2\Pi\zeta_i)\right)_2\\
    (2\Pi\zeta_i)_x \zeta_{i,y}-(2\Pi\zeta_i)_y \zeta_{i,x}\\
    0\\0\\0
\end{bmatrix}.
\end{equation*}
The flow-composed gradient with respect to the current augmented state is
\begin{equation*}
    \nabla_{z_0}h_i^{\mathrm b}(z_0,\tau_k)
    =
    J_{\mathrm b}(\tau_k,z_0)^\top
    \nabla_z h_i\!\left(z_{\mathrm b}(\tau_k)\right).
\end{equation*}
This chain rule is the quantity used when the positive-time barriers contribute to the LSE gradient and its Lie derivatives.

\subsubsection{Forward-Invariant Elliptical Terminal Backup Set}
\label{app:terminal_set}

The terminal construction uses a fixed enlarged ellipse together with a separate residual-motion condition, rather than embedding a smoothed stopping distance directly into the ellipse semi-axes.
Define
\[
R_x=r_x+d_{\mathrm b},
\qquad
R_y=r_y+d_{\mathrm b},
\qquad
S_{B}=\operatorname{diag}(R_x,R_y),
\]
where $d_{\mathrm b}>0$ is the additional terminal margin.
For terminal obstacle coordinate $\zeta_i=[\zeta_{i,x},\zeta_{i,y}]^\top$, define
\[
E_i
=
\zeta_i^\top S_{B}^{-2}\zeta_i
=
\frac{\zeta_{i,x}^2}{R_x^2}
+
\frac{\zeta_{i,y}^2}{R_y^2},
\qquad
\ell_i
=
\frac12\log E_i.
\]
Thus, $\ell_i\geq0$ is equivalent to $E_i\geq1$ and selects the exterior of the fixed terminal ellipse.
The logarithmic coordinate is dimensionless and is well defined on the evaluation domain $E_i>0$.

To quantify residual motion under the zero-surrogate backup, define
\[
c_\theta
=
\frac12
\left|
\frac{R_x}{R_y}
-
\frac{R_y}{R_x}
\right|,
\qquad
V_{\mathrm b}
=
\frac{v_x^2}{R_x^2}
+
\frac{v_y^2}{R_y^2}
+
c_\theta^2\omega_z^2,
\]
and let $\lambda_A:=\min(a_x,a_y,a_\omega)$ denote the lowest value of the diagonal matrix $A$.
The quantity $V_{\mathrm b}$ has units of inverse time squared and measures normalized residual translation together with a bound on the yaw contribution induced by ellipse anisotropy.
Under $\hat u_{\mathrm b}=0$, the components satisfy $\dot v_x=-a_xv_x$, $\dot v_y=-a_yv_y$, and $\dot\omega_z=-a_\omega\omega_z$.
Consequently,
\begin{equation}
    \dot V_{\mathrm b}
    \leq
    -2\lambda_A\,V_{\mathrm b}.
    \label{eq:terminal_motion_decay}
\end{equation}

For each valid LiDAR point, we define two terminal backup-set constituents: $h_{B,i}^{\mathrm{geom}}$ for geometric clearance and $h_{B,i}^{\mathrm{rec}}$ for recoverability under the residual commanded motion.
\begin{equation*}
    h_{B,i}^{\mathrm{geom}}
    :=
    \ell_i,
    \qquad
    h_{B,i}^{\mathrm{rec}}
    :=
    \ell_i^2
    -
    \frac{2V_{\mathrm b}}{\lambda_A^2}.
\end{equation*}
The per-ray set is $\mathcal C_{B,i}^z:=\{z:h_{B,i}^{\mathrm{geom}}(z)\geq0,\ h_{B,i}^{\mathrm{rec}}(z)\geq0\}$ and the complete augmented terminal set is $\mathcal C_B^z:=\bigcap_i\mathcal C_{B,i}^z$. The generic terminal index $j$ in Section~\ref{sec:predictive_backup} therefore ranges over these two terminal constituents for all LiDAR rays.
Neither constituent is claimed to be forward invariant independently; the invariant object is their intersection.
Equivalently, inside the geometric branch the recoverability condition requires $\ell_i\geq\sqrt{2V_{\mathrm b}}/\lambda_A$.
At zero residual motion, the set reduces to the exterior of the fixed enlarged terminal ellipse.

\begin{proposition}[Forward invariance of the terminal backup set]
\label{prop:terminal_invariance}
For the fixed-scan local model in Appendix~\ref{app:scan_fixed_geometry}, $\mathcal C_B^z$ is forward invariant under the continuous zero-surrogate backup controller $\hat u_{\mathrm b}=0$.
Moreover, $E_x(\mathcal C_B^z)\subseteq\mathcal C_S$ because $R_x>r_x$ and $R_y>r_y$.
\end{proposition}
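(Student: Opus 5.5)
The plan is to prove invariance ray by ray. The backup flow is the same for every ray, and each $\mathcal C_{B,i}^z$ depends only on $\zeta_i$ and the command state. So it suffices to show that each per-ray set $\mathcal C_{B,i}^z$ is forward invariant under $\hat u_{\mathrm b}=0$; invariance of the intersection $\mathcal C_B^z$ then follows immediately.

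\textbf{Step 1: rate bound on $\ell_i$.} Along the backup flow, $\dot\zeta_i=-[v_x,v_y]^\top-\omega_zJ\zeta_i$, so
\[
\dot E_i=-2\zeta_i^\top S_B^{-2}[v_x,v_y]^\top-2\omega_z\zeta_i^\top S_B^{-2}J\zeta_i .
\]
I would bound the two terms separately.
\begin{itemize}
\item For the translational term, Cauchy--Schwarz in the $S_B^{-2}$ inner product gives $|\zeta_i^\top S_B^{-2}v|\le\sqrt{E_i}\,\sqrt{v_x^2/R_x^2+v_y^2/R_y^2}$.
\item For the rotational term, write $a=\zeta_{i,x}/R_x$ and $b=\zeta_{i,y}/R_y$. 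Then $\zeta_i^\top S_B^{-2}J\zeta_i=ab\,(R_x/R_y-R_y/R_x)$ up to sign, and $|ab|\le E_i/2$. Hence this term is bounded by $c_\theta E_i$, which explains the choice of $c_\theta$.
\end{itemize}
Dividing by $2E_i$ yields
\[
|\dot\ell_i|\le E_i^{-1/2}\sqrt{v_x^2/R_x^2+v_y^2/R_y^2}+c_\theta|\omega_z| .
\]
On the region $\ell_i\ge0$, that is $E_i\ge1$, this gives $|\dot\ell_i|\le\sqrt{2V_{\mathrm b}}$, using $\alpha+\beta\le\sqrt{2(\alpha^2+\beta^2)}$.

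\textbf{Step 2: Lyapunov-like monotone quantity.} I would define $W_i:=\ell_i-\sqrt{2V_{\mathrm b}}/\lambda_A$. On $\{\ell_i\ge0\}$, the set $\{W_i\ge0\}$ coincides with $\{h_{B,i}^{\mathrm{rec}}\ge0\}$. From \eqref{eq:terminal_motion_decay}, $\tfrac{d}{dt}\sqrt{2V_{\mathrm b}}\le-\lambda_A\sqrt{2V_{\mathrm b}}$ wherever $V_{\mathrm b}>0$. Combining this with Step 1 gives $\dot W_i\ge-\sqrt{2V_{\mathrm b}}+\sqrt{2V_{\mathrm b}}=0$ whenever $E_i\ge1$.

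The main obstacle is that $\sqrt{V_{\mathrm b}}$ is not differentiable at $V_{\mathrm b}=0$. To avoid this, I would not differentiate it and instead argue in integral form. Since $\sqrt{2V_{\mathrm b}(t)}\le\sqrt{2V_{\mathrm b}(0)}\,e^{-\lambda_At}$ for all $t$, as long as $E_i\ge1$ on $[0,t]$ we have
\[
\ell_i(t)\ge\ell_i(0)-\sqrt{2V_{\mathrm b}(0)}\,\frac{1-e^{-\lambda_At}}{\lambda_A}.
\]
Combined with $\ell_i(0)\ge\sqrt{2V_{\mathrm b}(0)}/\lambda_A$, this gives $\ell_i(t)\ge\sqrt{2V_{\mathrm b}(0)}\,e^{-\lambda_At}/\lambda_A\ge\sqrt{2V_{\mathrm b}(t)}/\lambda_A\ge0$. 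The case $V_{\mathrm b}(0)=0$ is trivial, since the state is then stationary. Note that $E_i>0$ throughout, so $\ell_i$ is well defined.

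\textbf{Step 3: closing the argument.} I would use a first-exit argument. Suppose $t^\ast$ is the first time at which $E_i<1$. On $[0,t^\ast]$ Step 2 applies and gives $\ell_i(t^\ast)\ge0$ with a margin that persists, which contradicts the definition of $t^\ast$ by continuity. Therefore both $\ell_i\ge0$ and $\ell_i^2\ge 2V_{\mathrm b}/\lambda_A^2$ hold for all $t\ge0$. By the ray-wise reduction above, $\mathcal C_B^z$ is forward invariant.

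\textbf{Step 4: inclusion in $\mathcal C_S$.} On $\mathcal C_B^z$ we have $E_i\ge1$ for every ray. Since $R_x>r_x$ and $R_y>r_y$,
\[
h_i=\zeta_{i,x}^2/r_x^2+\zeta_{i,y}^2/r_y^2-1\ge E_i-1\ge0 .
\]
Hence $E_x(\mathcal C_B^z)\subseteq\mathcal C_S$.
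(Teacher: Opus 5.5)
Your proof is correct, and it uses the same two key estimates as the paper. Your Step 1 bound $\dot\ell_i\ge-\sqrt{2V_{\mathrm b}}$ on $\{E_i\ge1\}$ is exactly the paper's bound. The paper gets the rotational term from the spectral norm $c_\theta$ of the symmetric part of $M_B=S_B^{-1}JS_B$ in the coordinates $w_i=S_B^{-1}\zeta_i$, which is equivalent to your $|ab|\le E_i/2$ computation. Your Step 4 inclusion argument is identical to the paper's.

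The genuine difference is how invariance is concluded from these estimates. The paper argues locally via Nagumo's boundary condition:
\begin{itemize}
\item at points with $h^{\mathrm{rec}}_{B,i}=0$ and $\ell_i>0$, it differentiates the smooth function $\ell_i^2-2V_{\mathrm b}/\lambda_A^2$ and shows $\dot h^{\mathrm{rec}}_{B,i}\ge0$, with equality possible;
\item at $\ell_i=0$, it notes that the recoverability constraint forces $V_{\mathrm b}=0$, hence a stationary state.
\end{itemize}
You instead integrate along trajectories, using the decay $\sqrt{2V_{\mathrm b}(t)}\le\sqrt{2V_{\mathrm b}(0)}\,e^{-\lambda_At}$ and a first-exit argument.

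The paper's route is shorter and follows the standard barrier-function template. However, a non-strict condition $\dot h\ge0$ only gives invariance through Nagumo's tangent-cone characterization. That implicitly needs $\nabla h^{\mathrm{rec}}_{B,i}\neq0$ where $\ell_i>0$, and separate treatment of the corner $\ell_i=0$, $V_{\mathrm b}=0$, where this gradient vanishes.

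Your comparison argument needs no appeal to Nagumo's theorem or any boundary regularity, and it avoids the non-differentiability of $\sqrt{V_{\mathrm b}}$. It also yields a quantitative bonus: the clearance margin $\ell_i(t)\ge\sqrt{2V_{\mathrm b}(0)}\,e^{-\lambda_At}/\lambda_A$. Equivalently, the stopping maneuver consumes at most $\sqrt{2V_{\mathrm b}(0)}/\lambda_A$ of log-clearance.

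One small remark, shared with the paper: the claim that $V_{\mathrm b}(0)=0$ implies a stationary state uses $c_\theta>0$, which holds here since $R_x\neq R_y$. In the circular case yaw may still rotate, but your Step 1 bound holds wherever $E_i>0$ and then gives $\dot\ell_i=0$ anyway.
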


\begin{proof}
Let
\[
w_i=S_{B}^{-1}\zeta_i,
\qquad
\bar v_{B}
=
S_{B}^{-1}
\begin{bmatrix}
v_x\\v_y
\end{bmatrix},
\qquad
M_{B}
=
S_{B}^{-1}JS_{B}.
\]
Because $E_i=\|w_i\|_2^2$, we have $\ell_i=\log\|w_i\|_2$.
Using $\dot \zeta_i=-[v_x,v_y]^\top-\omega_zJ\zeta_i$ gives $\dot w_i=-\bar v_{B}-\omega_zM_{B}w_i$.
The symmetric part of $M_{B}$ has spectral norm $c_\theta$, so $|w_i^\top M_{B}w_i|\leq c_\theta\|w_i\|_2^2$.
Therefore, on the geometric branch $\ell_i\geq0$, where $\|w_i\|_2\geq1$,
\begin{align}
\dot\ell_i
&=
-\frac{w_i^\top\bar v_{B}}{\|w_i\|_2^2}
-
\omega_z
\frac{w_i^\top M_{B}w_i}{\|w_i\|_2^2}
\\
&\geq
-\|\bar v_{B}\|_2
-c_\theta|\omega_z|
\\
&\geq
-\sqrt{2V_{\mathrm b}}.
\label{eq:terminal_log_bound}
\end{align}
Consider a boundary point with $h_{B,i}^{\mathrm{rec}}=0$ and $h_{B,i}^{\mathrm{geom}}\geq0$.
Then $\ell_i=\sqrt{2V_{\mathrm b}}/\lambda_A$, and \eqref{eq:terminal_motion_decay}--\eqref{eq:terminal_log_bound} give
\begin{align*}
\dot h_{B,i}^{\mathrm{rec}}
&=
2\ell_i\dot\ell_i
-
\frac{2}{\lambda_A^2}\dot V_{\mathrm b}
\\
&\geq
-2\ell_i\sqrt{2V_{\mathrm b}}
+
\frac{4V_{\mathrm b}}{\lambda_A}
=0.
\end{align*}
Hence, the backup vector field does not point outward through the recoverability boundary.
If instead $h_{B,i}^{\mathrm{geom}}=\ell_i=0$ while $h_{B,i}^{\mathrm{rec}}\geq0$, the recoverability condition implies $V_{\mathrm b}=0$ and therefore $v_x=v_y=\omega_z=0$.
Consequently $\dot\ell_i=0$, so the vector field is tangent to the geometric boundary.
The Nagumo boundary condition holds on every active boundary of $\mathcal C_{B,i}^z$, and the intersection over all valid fixed-scan LiDAR points is forward invariant under the same backup vector field.
Finally, the terminal ellipse strictly contains the ordinary safety ellipse because $R_x=r_x+d_{\mathrm b}>r_x$ and $R_y=r_y+d_{\mathrm b}>r_y$; being outside the terminal ellipse therefore implies being outside the ordinary ellipse, yielding $E_x(\mathcal C_B^z)\subseteq\mathcal C_S$.
\end{proof}

For the experimental geometry, $r_x=0.58945\,\mathrm m$, $r_y=0.39445\,\mathrm m$, and $d_{\mathrm b}=0.60\,\mathrm m$.
Therefore $R_x=1.18945\,\mathrm m$, $R_y=0.99445\,\mathrm m$, and $c_\theta=0.180015$ to the reported precision.
With $A=\operatorname{diag}(8,8,8)\,\mathrm{s}^{-1}$, the residual-motion decay rate is $\lambda_A=8\,\mathrm{s}^{-1}$.
These values match the implementation summary in Appendix~\ref{app:filter_parameters}. Proposition~\ref{prop:terminal_invariance} verifies the terminal-set condition in Assumption~\ref{ass:continuous_safety_conditions}(iii) for this specialization.

For implementation, define $\Pi_{B}:=S_{B}^{-2}$, so $E_i=\zeta_i^\top \Pi_{B}\zeta_i$.
The terminal-state gradient of the logarithmic clearance can then be written without introducing an additional helper symbol as
\begin{equation*}
\nabla_z\ell_i
=
\begin{bmatrix}
-\left(R(\theta)\dfrac{\Pi_{B}\zeta_i}{E_i}\right)_1\\[1mm]
-\left(R(\theta)\dfrac{\Pi_{B}\zeta_i}{E_i}\right)_2\\[1mm]
\left(\dfrac{\Pi_{B}\zeta_i}{E_i}\right)_x \zeta_{i,y}
-
\left(\dfrac{\Pi_{B}\zeta_i}{E_i}\right)_y \zeta_{i,x}\\[1mm]
0\\0\\0
\end{bmatrix}.
\end{equation*}
The residual-motion gradient is
\begin{equation*}
\nabla_zV_{\mathrm b}
=
\begin{bmatrix}
0\\0\\0\\
2v_x/R_x^2\\
2v_y/R_y^2\\
2c_\theta^2\omega_z
\end{bmatrix}.
\end{equation*}
Hence $\nabla_z h_{B,i}^{\mathrm{geom}}=\nabla_z\ell_i$ and
\begin{equation*}
    \nabla_z h_{B,i}^{\mathrm{rec}}
    =
    2\ell_i\nabla_z\ell_i
    -
    \frac{2}{\lambda_A^2}\nabla_zV_{\mathrm b}.
\end{equation*}
The gradients with respect to the current augmented state are obtained through the same terminal backup-flow sensitivity used for the rollout constituents,
\begin{equation*}
    \nabla_{z_0}h_{B,i}^{(\cdot)}(z_0,T)
    =
    J_{\mathrm b}(T,z_0)^\top
    \nabla_z h_{B,i}^{(\cdot)}(z_{\mathrm b}(T)).
\end{equation*}
This preserves one common sensitivity pipeline for both terminal constituents and introduces no additional finite-difference approximation.

\subsubsection{Implementation-Specific LSE Scaling}
\label{app:lse_scaling}

The implementation applies only fixed, code-defined scaling before hierarchical LSE aggregation.
The current-time HOCBF constituent has units of inverse time and is normalized as
\begin{equation*}
    \bar\psi_i
    =
    \frac{\psi_i}{s_\psi},
    \qquad
    s_\psi=1.0\,\mathrm{s}^{-1}.
\end{equation*}
The positive-time geometric barriers $h_i^{\mathrm b}(z,\tau_k)$ are dimensionless and are inserted directly, so $\bar h_i^{\mathrm b}=h_i^{\mathrm b}$ with no separate rollout-scale hyperparameter.
The two terminal constituents are likewise dimensionless and use the fixed scales $s_{B,\mathrm{geom}}=s_{B,\mathrm{rec}}=1$, so their numerical values are unchanged by normalization.
Thus, no independent tunable rollout or terminal normalization is hidden from the parameter table.

The control-input margins are made dimensionless using fixed positive per-axis reference scales.
Each control-input component is defined as $s_{I,d}:= \max\!\left(|u_{\min,d}|,|u_{\max,d}|\right)> 0$.
The normalized input constituents are
\begin{equation*}
    \bar I_d^+
    =
    \frac{u_{\max,d}-u_d}{s_{I,d}},
    \qquad
    \bar I_d^-
    =
    \frac{u_d-u_{\min,d}}{s_{I,d}}.
\end{equation*}
The input bounds used in this work are $u_{\min}=[-1,-1,-1]^\top$ and $u_{\max}=[2,1,1]^\top$, which gives $s_I=[2,1,1]^\top$.
Accordingly, the longitudinal, lateral, and yaw-rate margins are normalized by
$2~\mathrm{m\,s^{-1}}$, $1~\mathrm{m\,s^{-1}}$, and $1~\mathrm{rad\,s^{-1}}$, respectively.
Because each scale is strictly positive, the normalization preserves the sign of the corresponding input-admissibility constituent while only affecting its numerical weighting within the LSE aggregation.

\subsubsection{Safety-Filter and Implementation Parameters}
\label{app:filter_parameters}

Table~\ref{tab:safety_filter_parameters} summarizes the principal parameters used by the high-level safety-filter configuration.
The high-level policy and safety filter operate at $10\,\mathrm{Hz}$, while the frozen low-level locomotion controller operates at $50\,\mathrm{Hz}$.
The backup flow uses the zero-surrogate controller with the same first-order control-input dynamics as the safety filter and evaluates its planar displacement with fixed eight-point Gauss--Legendre quadrature.
The rollout-count notation distinguishes the 21 total nodes stored by the implementation parameter from the 20 strictly positive-time samples that actually enter the LSE certificate.

\begin{table}[h]
\centering
\caption{Safety-filter and local-navigation parameters used by the verified implementation.}
\label{tab:safety_filter_parameters}
\small
\begin{tabular}{lll}
\toprule
Parameter & Symbol & Value \\
\midrule
\multicolumn{3}{l}{\textit{Control and sensing}}\\
LiDAR update rate & -- & $10\,\mathrm{Hz}$ \\
Low-level control rate & -- & $50\,\mathrm{Hz}$ \\
LiDAR field of view & -- & $360^\circ$ \\
Number of LiDAR rays & $N_h$ & $120$ \\
LiDAR range limits & $\rho_{\min},\rho_{\max}$ & $0.05,\ 5.0\,\mathrm{m}$ \\
LiDAR body offset & $p_L^B$ & $[0.28945,0]^\top\,\mathrm m$ \\
Robot length & $L$ & $0.70\,\mathrm{m}$ \\
Robot width & $w_{\mathrm{rob}}$ & $0.31\,\mathrm{m}$ \\
Configured safety distance & $d_{\mathrm{safe}}$ & $0.30\,\mathrm{m}$ \\
Effective footprint clearance & $d_{\mathrm{eff}}$ & $0.23945\,\mathrm{m}$ \\
Longitudinal CBF semi-axis & $r_x$ & $0.58945\,\mathrm{m}$ \\
Lateral CBF semi-axis & $r_y$ & $0.39445\,\mathrm{m}$ \\
\midrule
\multicolumn{3}{l}{\textit{Control-input augmentation}}\\
Control-dynamics gain & $A$ & $\operatorname{diag}(8,8,8)\,\mathrm{s}^{-1}$ \\
Nominal tracking gain & $\Sigma$ & $\operatorname{diag}(8,8,8)\,\mathrm{s}^{-1}$ \\
Current HOCBF gain & $k_0$ & $2.0\,\mathrm{s}^{-1}$ \\
Composite CBF gain & $\alpha_{\mathrm{aug}}$ & $0.8\,\mathrm{s}^{-1}$ \\
Longitudinal speed command limits & $v_x$ & $[-1.0,2.0]\,\mathrm{m\,s^{-1}}$ \\
Lateral speed command limits & $v_y$ & $[-1.0,1.0]\,\mathrm{m\,s^{-1}}$ \\
Yaw-rate command limits & $\omega_z$ & $[-1.0,1.0]\,\mathrm{rad\,s^{-1}}$ \\
\midrule
\multicolumn{3}{l}{\textit{Backup prediction and terminal set}}\\
Backup surrogate input & $\hat u_{\mathrm b}$ & $0$ \\
Backup horizon & $T$ & $4.0\,\mathrm{s}$ \\
Total rollout nodes & $N_{\mathrm b}$ & $21$ (includes $\tau_0=0$) \\
Positive-time LSE samples & $K$ & $20$ \\
Positive sample times & $\tau_k$ & $0.2k\,\mathrm{s},\ k=1{:}20$ \\
Fixed terminal margin & $d_{\mathrm b}$ & $0.60\,\mathrm{m}$ \\
Terminal longitudinal semi-axis & $R_x$ & $1.18945\,\mathrm{m}$ \\
Terminal lateral semi-axis & $R_y$ & $0.99445\,\mathrm{m}$ \\
Yaw anisotropy coefficient & $c_\theta$ & $0.180015$ \\
Residual-motion decay rate & $\lambda_A$ & $8.0\,\mathrm{s}^{-1}$ \\
\midrule
\multicolumn{3}{l}{\textit{LSE aggregation}}\\
Recoverability LSE sharpness & $\kappa_{B}$ & $50$ \\
Input LSE sharpness & $\kappa_{U}$ & $20$ \\
Outer LSE sharpness & $\kappa_{O}$ & $20$ \\
HOCBF normalization scale & $s_\psi$ & $1.0\,\mathrm{s}^{-1}$ \\
Terminal geometric scale & $s_{B,\mathrm{geom}}$ & $1.0$ \\
Terminal recoverability scale & $s_{B,\mathrm{rec}}$ & $1.0$ \\
\bottomrule
\end{tabular}
\vspace{-5mm}
\end{table}

\section{Finite-Rate Realization and Guarantee Scope}
\label{app:finite_rate_realization}

\subsection{Nominal Surrogate Tracking}
\label{app:nominal_tracking}
Let $u^{\mathrm{nom}}(t)$ denote the nominal physical command. Substituting
\eqref{eq:nominal_surrogate} into the augmented command dynamics gives
\[
    \dot u=\dot u^{\mathrm{nom}}+\Sigma(u^{\mathrm{nom}}-u),
    \qquad
    \dot e_u=-\Sigma e_u,
    \quad e_u:=u-u^{\mathrm{nom}}.
\]
Hence the applied command exponentially tracks the nominal command whenever
the safety projection is inactive. In the $10\,\mathrm{Hz}$ implementation,
\[
    \dot u_t^{\mathrm{nom}}
    \approx
    \iota_{t-1}\frac{u_t^{\mathrm{nom}}-u_{t-1}^{\mathrm{nom}}}{\Delta t},
\]
where $\iota_{t-1}=0$ when no valid previous nominal command is available and
$1$ otherwise.

\subsection{Finite-Rate Experimental Realization}
\label{app:finite_rate_exp}

The reported experiments evaluate the safety filter at the high-level control interval $\Delta t=0.1\,\mathrm{s}$.
The analytical projection in \eqref{eq:closed_form_solution} is first applied in surrogate-control space to obtain $\hat{u}_t^\star$.
During each high-level interval, $\hat{u}_t^\star$ is held constant under the first-order command dynamics
\[
    \dot{u}=-Au+A\hat{u}_t^\star .
\]
Because $A$ is diagonal, the corresponding command-state update is evaluated exactly as
\begin{equation}
    u_{t}^{\mathrm{pre}}
    =
    \Lambda u_t^-
    +
    (I-\Lambda)\hat{u}_t^\star,
    \qquad
    \Lambda:=e^{-A\Delta t}.
    \label{eq:finite_rate_command_update}
\end{equation}

The physical command executed by the environment is then obtained through the component-wise command limits,
\begin{equation}
    u_{t}^{\mathrm{safe}}
    =
    \operatorname{clip}
    \left(
        u_{t}^{\mathrm{pre}},
        u_{\min},
        u_{\max}
    \right).
    \label{eq:finite_rate_command_clip}
\end{equation}
At the next filter evaluation, this executed command is used as the current control-input state of the augmented model.

Thus, the reported implementation uses the single-half-space analytical projection of \eqref{eq:closed_form_solution} followed by the exact finite-rate propagation in \eqref{eq:finite_rate_command_update} and physical command clipping in \eqref{eq:finite_rate_command_clip}.
The same command limits are included in the composite safety certificate.
In the finite-rate implementation, a final component-wise clipping operation also enforces these limits on the executed command.
The continuous-time guarantee of Theorem~\ref{thm:continuous_composite_safety} therefore remains scoped to the continuous augmented model.

\subsection{Guarantee Scope}
\label{app:guarantee_scope}
Theorem~\ref{thm:continuous_composite_safety} is a continuous-time result for the reduced safety model.
Two sampling gaps remain outside that theorem: (i) the backup trajectory is checked only at the selected $\tau_k$, and (ii) the safety filter is recomputed only at the $10\,\mathrm{Hz}$ high-level updates.
In the reported finite-rate implementation, the same physical command limits included in the composite certificate are also enforced by a final
component-wise clipping operation.
This ensures that the executed command remains within its prescribed bounds, but the continuous-time theorem does not guarantee that the clipped command satisfies the composite CBF condition between high-level updates.

The LiDAR certificate is also local to the current scan. During one filter
computation, observed endpoints are fixed while the reduced pose and command
states evolve relative to them. The result therefore assumes that this local
model and the zero-surrogate backup adequately represent the high-level
motion over the prediction horizon. It does not certify full-body tracking
error, unobserved or occluded geometry, moving obstacles absent from the
model, or arbitrary LiDAR noise. Consequently, the formal guarantee applies
to the stated reduced safety model and measured scan-defined obstacles, not
to every full quadruped state.

\begin{assumption}[Finite-rate experimental realization]
\label{ass:finite_rate_implementation}
Over each high-level interval, the reduced pose and first-order command
dynamics are assumed to remain representative of the high-level closed-loop
motion, and the frozen $50\,\mathrm{Hz}$ locomotion controller is assumed to
track the resulting high-level command sufficiently well for that reduced
model to remain applicable.
\end{assumption}

\section{SAFTER--SAC Training Details}
\label{app:sac_details}

\subsection{Critic and Actor Updates}
\label{app:critic_actor_update}
The actor uses a reparameterized tanh-Gaussian policy in normalized action
coordinates $\tilde u_t^{\mathrm{nom}}\in[-1,1]^m$. The corresponding physical
nominal command is
$u_t^{\mathrm{nom}}=c_u+S_u\tilde u_t^{\mathrm{nom}}$ as defined in
Appendix~\ref{app:pomdp_details}. The complete finite-rate safety map consists
of the nominal-surrogate construction, the analytical half-space projection
in \eqref{eq:closed_form_solution}, the finite-rate command update in
\eqref{eq:finite_rate_command_update}, and the final component-wise
clipping in \eqref{eq:finite_rate_command_clip}. It produces
$u_t^{\mathrm{safe}}=\mathcal F_\eta(\xi_t,u_t^{\mathrm{nom}})$, which is
mapped back to normalized critic coordinates as
$\tilde u_t^{\mathrm{safe}}=S_u^{-1}(u_t^{\mathrm{safe}}-c_u)$.

For replay sample
$(o_t,\tilde u_t^{\mathrm{safe}},r_t,o_{t+1},\chi_t)$, where
$\chi_t$ is the true-termination indicator, the next nominal action is
filtered with the same forward map without gradient propagation to obtain
$\tilde u_{t+1}^{\mathrm{safe}}$. With twin critics
$Q_{\vartheta_1},Q_{\vartheta_2}$ and target networks
$Q_{\bar\vartheta_1},Q_{\bar\vartheta_2}$, the SAC target is
\[
    \mathcal Y_t
    =r_t+\gamma(1-\chi_t)
    \left[
        \min_{i=1,2}Q_{\bar\vartheta_i}
        (o_{t+1},\tilde u_{t+1}^{\mathrm{safe}})
        -\alpha_{\mathrm{ent}}
        \log\pi_\phi(\tilde u_{t+1}^{\mathrm{nom}}\mid o_{t+1})
    \right],
\]
and the critics minimize
\[
    J_Q
    =\mathbb E_{\mathcal D}\!\left[
       \sum_{i=1}^{2}
       \bigl(Q_{\vartheta_i}(o_t,\tilde u_t^{\mathrm{safe}})-\mathcal Y_t\bigr)^2
    \right].
\]
Thus the critics evaluate the action that actually generated the transition.

For the actor update, the critic parameters are fixed and
\[
    J_\pi(\phi)
    =\mathbb E\!\left[
        \alpha_{\mathrm{ent}}
        \log\pi_\phi(\tilde u_t^{\mathrm{nom}}\mid o_t)
        -\min_{i=1,2}Q_{\vartheta_i}(o_t,\tilde u_t^{\mathrm{safe}})
    \right].
\]
Gradients propagate through $\tilde u_t^{\mathrm{nom}}
    \mapsto u_t^{\mathrm{nom}}
    \mapsto \hat u_t^{\mathrm{nom}}
    \mapsto \hat u_t^\star
    \mapsto u_t^{\mathrm{pre}}
    \mapsto u_t^{\mathrm{safe}}
    \mapsto \tilde u_t^{\mathrm{safe}},$
where $\hat u_t^\star$ is the analytical half-space projection and
$u_t^{\mathrm{pre}}$ is the finite-rate command before clipping.

Holding the current filter state and command history fixed, define the local
projection Jacobian
\[
J_{\mathrm{proj},t}
:=
\frac{\partial \hat u_t^\star}
     {\partial \hat u_t^{\mathrm{nom}}}.
\]
Away from the switching boundary,
\[
J_{\mathrm{proj},t}
=
\begin{cases}
I_m, & \Omega_t \ge 0,\\[1mm]
I_m-\dfrac{a_t^\top a_t}{a_ta_t^\top}, & \Omega_t<0.
\end{cases}
\]
Thus, the analytical projection removes the component of the actor gradient
normal to the active safety half-space.

For the implemented finite-rate map, the corresponding local physical-action
Jacobian is
\begin{equation}
J_{F,t}
=
D_{\mathrm{clip},t}
B_\Delta
J_{\mathrm{proj},t}
A^{-1}
\left(
\frac{\iota_{t-1}}{\Delta t}I_m+\Sigma
\right),
\label{eq:full_filter_jacobian}
\end{equation}
where $B_\Delta=I-e^{-A\Delta t}$ and $D_{\mathrm{clip},t}$ denotes the local derivative of the component-wise clipping operation.
The surrounding mappings between normalized and physical action coordinates are handled by automatic differentiation.

The Identity-Jacobian ablation preserves the same numerical forward action but replaces only $J_{\mathrm{proj},t}$ by $I_m$ during the actor backward pass.
All other mappings in the safety-filter chain retain their ordinary
derivatives.

\subsection{Entropy Temperature and Critic Parameterization}
\label{app:entropy_temperature}
Entropy regularization remains defined on the nominal stochastic policy,
because its tanh-Gaussian log-density is tractable whereas the deterministic
filtered action does not generally have a tractable induced density. The
implementation uses automatic entropy tuning with
$\vartheta_\alpha:=\log\alpha_{\mathrm{ent}}$ and
\[
    J_\alpha(\vartheta_\alpha)
    =\mathbb E\!\left[
      -\vartheta_\alpha
      \bigl(\log\pi_\phi(\tilde u_t^{\mathrm{nom}}\mid o_t)
      +\mathcal H_{\mathrm{target}}\bigr)
    \right],
\]
using $\mathcal H_{\mathrm{target}}=-3$ for the three-dimensional command.
The log-probability is detached in this temperature update.

The critics are parameterized as $Q_\vartheta(o_t,\tilde u_t^{\mathrm{safe}})$.
Because the correction reward depends on both the nominal and projected surrogate commands, distinct nominal actions can produce the same executed action while incurring different correction costs.
Moreover, nominal-command history is included in the subsequent observation.
The critic can therefore alias such nominal-action choices and should be viewed as a practical observation-conditioned approximation rather than an exact Markov action-value representation.
Accordingly, ``exact'' in the Jacobian discussion refers to the local
analytical projection derivative, not to an exact policy gradient for the complete objective.

\subsection{Time-Limit Truncation}
\label{app:time_limit_truncation}
The Bellman mask uses $\chi_t=1$ for true task termination and $\chi_t=0$ for
time-limit truncation. For a truncation, replay stores the final observation
before reset as $o_{t+1}$, so the SAC target continues to bootstrap through
the artificial time limit.

\subsection{High-Level Reward Function}
\label{app:reward}

At each $10\,\mathrm{Hz}$ policy step, the reward is the
weighted sum of the components in
Table~\ref{tab:reward_weights}.
All SAC-based methods in our framework use the same task
terms, with the correction term disabled for Vanilla SAC and the correction-reward-off ablations.

\begin{table}[ht]
\centering
\caption{High-level reward components and signed weights.
The correction weight is $-0.05$ by default and zero for
Vanilla SAC and correction-reward-off ablations.}
\label{tab:reward_weights}
\small
\setlength{\tabcolsep}{5pt}
\renewcommand{\arraystretch}{1.3}
\begin{tabular}{@{}llr@{}}
\toprule
Component & Definition & Weight \\
\midrule
Goal progress
& $r_t^{\mathrm{prog}}=d_{t-1}-d_t$
& $+4.00$ \\
Goal reached
& $r_t^{\mathrm{goal}}=\mathbb I[d_t\leq d_{\mathrm{goal}}]$
& $+30.00$ \\
Time
& $r_t^{\mathrm{time}}=1$
& $-0.01$ \\
Timeout
& $r_t^{\mathrm{timeout}}=\mathbb I[\mathcal T_t]$
& $-5.00$ \\
Forward alignment
& $r_t^{\mathrm{align}}
=\dfrac{\operatorname{ReLU}(v_{x,t}^{\mathrm{meas}})}
{v_{x,\max}}\cos\beta_t$
& $+0.20$ \\
Reverse motion
& $r_t^{\mathrm{rev}}
=\left(\dfrac{\operatorname{ReLU}(-v_{x,t}^{\mathrm{meas}})}
{v_{x,\max}}\right)^2$
& $-0.50$ \\
Yaw tracking
& $r_t^{\mathrm{yaw}}
=\left(\dfrac{\omega_{z,t}^{\mathrm{meas}}
-\omega_t^{\mathrm{ref}}}{\omega_{z,\max}}\right)^2$
& $-0.05$ \\
Command smoothness
& $r_t^{\mathrm{smooth}}
=\|u_t^{\mathrm{safe}}-u_{t-1}^{\mathrm{safe}}\|_2^2$
& $-0.01$ \\
Filter correction
& $r_t^{\mathrm{corr}}
=\|(\hat u_t^{\mathrm{filt}}-\hat u_t^{\mathrm{nom}})
\oslash s_u\|_2^2$
& $-0.05/0$ \\
Unsafe termination
& $r_t^{\mathrm{coll}}=\mathbb I[\mathcal C_t]$
& $-20.00$ \\
\bottomrule
\end{tabular}
\end{table}

Here, $d_t$ is the Euclidean distance to the goal,
$d_{\mathrm{goal}}=0.35\,\mathrm m$, and $\beta_t$
is the signed body-frame goal bearing.
The measured forward velocity and yaw rate are
$v_{x,t}^{\mathrm{meas}}$ and
$\omega_{z,t}^{\mathrm{meas}}$.
Their command limits are $v_{x,\max}=2\,\mathrm{m/s}$
and $\omega_{z,\max}=1\,\mathrm{rad/s}$.
The reference yaw rate is
$\omega_t^{\mathrm{ref}}
=\operatorname{clip}(k_\beta\beta_t,
-\omega_{z,\max},\omega_{z,\max})$,
with $k_\beta=1.5\,\mathrm{s}^{-1}$.
The indicator $\mathbb I[\cdot]$ equals one when its
condition holds, and $\operatorname{ReLU}(q)=\max(0,q)$.
The event $\mathcal C_t$ denotes termination due to
obstacle collision, boundary-wall contact, or falling.
The event $\mathcal T_t$ denotes time-limit truncation
without prior goal or unsafe termination.
For SAFTER, we used $\hat u_t^{\mathrm{filt}}=\hat u_t^\star$, while for other filtered variants it denotes the corresponding filter output.
The positive vector $s_u$ contains componentwise
correction-normalization scales, and $\oslash$ denotes
elementwise division.

\subsection{SAC Network Architecture and Optimization}
\label{app:sac_implementation}
All SAC-based methods in our framework share hidden layers of 512, 256, and 128 units with ELU activations.
The actor is a reparameterized tanh-squashed diagonal Gaussian policy outputting three means and three log-standard deviations,
the latter clamped to $[-5,2]$.
Twin Q-networks use separate Polyak-averaged targets with $\tau_{\mathrm{P}}=0.005$.
The actor, critics, and entropy temperature use Adam with learning rate $3\times10^{-4}$.
Training uses discount $\gamma=0.99$, target entropy $-3$, a replay capacity of $2\times10^6$ transitions, and minibatches
of 1024.
Updates begin after 20,000 environment transitions, with one update per environment step.


\end{document}